\newtheorem{theorem}{Theorem}[section]
\newtheorem{lemma}{Lemma}[section]
\newtheorem{claim}{Claim}[section]
\newtheorem{proposition}{Proposition}[section]
\newcommand{\cD}{{\cal D}}
\newenvironment{proofclaim}{\noindent{\bf Proof of the claim:}}{\hfill$\star$}
\newcommand{\remove}[1]{}
\begin{document}

\baselineskip  0.2in 
\parskip     0.1in 
\parindent   0.0in 

\title{{\bf Deterministic Treasure Hunt in the Plane\\ with Angular Hints}}

\author{
S\'{e}bastien Bouchard\thanks{Sorbonne Universit\'e, CNRS, INRIA, LIP6, F-75005 Paris, France, E-mail: sebastien.bouchard@lip6.fr}
\and
Yoann Dieudonn\'{e}\thanks{
MIS Lab., Universit\'{e} de Picardie Jules Verne, France, E-mail: yoann.dieudonne@u-picardie.fr}
\and
Andrzej Pelc\thanks{D\'{e}partement d'informatique, Universit\'{e} du Qu\'{e}bec en Outaouais,
Gatineau, Qu\'{e}bec J8X 3X7,
Canada. E-mail: pelc@uqo.ca.
Supported in part by NSERC discovery grant 8136 -- 2013 
and by the Research Chair in Distributed Computing of
the Universit\'{e} du Qu\'{e}bec en Outaouais.}
\and
Franck Petit\thanks{Sorbonne Universit\'e, CNRS, INRIA, LIP6, F-75005 Paris, France, E-mail: Franck.Petit@lip6.fr}
}

\date{ }
\maketitle

\begin{abstract}
A mobile agent equipped with a compass and a measure of length has to find an inert treasure in the Euclidean plane. Both the agent and the treasure are modeled as points. In the beginning, the agent is at a distance at most $D>0$ from the treasure, but knows neither the distance nor any bound on it. Finding the treasure means getting at distance at most 1 from it. The agent makes a series of moves. Each of them consists in moving straight in a chosen direction at a chosen distance. In the beginning and after each move the agent gets a hint consisting of a positive angle smaller than $2\pi$ whose vertex is at the current position of the agent and within which the treasure is contained. We investigate the problem of how these hints permit the agent to lower the cost of finding the treasure, using a deterministic algorithm, where the cost is the worst-case total length of the agent's trajectory. It is well known that without any hint the optimal (worst case) cost is $\Theta(D^2)$. We show that if all angles given as hints are at most $\pi$, then the cost can be lowered to $O(D)$, which is optimal. If all angles are at most $\beta$,
where $\beta<2\pi$ is a constant unknown to the agent, then the cost is at most $O(D^{2-\epsilon})$, for some $\epsilon>0$.
For both these positive results we present deterministic algorithms achieving the above costs. Finally, if angles given as hints can be arbitrary, smaller than $2\pi$, then we show that cost $\Theta(D^2)$ cannot be beaten.

\vspace*{2ex}

\noindent {\bf Keywords:} exploration, treasure hunt, algorithm, mobile agent, hint, cost, plane. 

\vspace*{4cm}
\end{abstract}

\thispagestyle{empty}
\setcounter{page}{0}
\pagebreak

\section{Introduction}

{\bf Motivation.}
A tourist visiting an unknown town wants to find her way to the train station or a skier lost on a slope wants to get back to the hotel.
Luckily, there are many people that can help. However, often they are not sure of the exact direction: when asked about it, they make a vague gesture with the arm
swinging around the direction to the target, accompanying the hint with the words ``somewhere there''.  In fact,  they show an angle containing the target.
Can such vague hints help the lost traveller to find the way to the target? The aim of the present paper is to answer this question.

{\bf The model and problem formulation.}
A mobile agent equipped with a compass and a measure of length has to find an inert treasure in the Euclidean plane. Both the agent and the treasure are modeled as points. In the beginning, the agent is at a distance at most $D>0$ from the treasure, but knows neither the distance nor any bound on it. Finding the treasure means getting at distance at most 1 from it. In applications, from such a distance the treasure can be seen. The agent makes a series of moves. Each of them consists in moving straight in a chosen direction at a chosen distance. In the beginning and after each move the agent gets a hint consisting of a positive angle smaller than $2\pi$ whose vertex is at the current position of the agent and within which the treasure is contained. We investigate the problem of how these hints permit the agent to lower the cost of finding the treasure, using a deterministic algorithm, where the cost is the worst-case total length of the agent's trajectory.
It is well known that the optimal cost of treasure hunt without hints is $\Theta(D^2)$. (The algorithm of cost $O(D^2)$ is to trace a spiral with jump 1 starting at the initial position of the agent, and the lower bound $\Omega(D^2)$ follows from Proposition \ref{lower} which establishes this lower bound even assuming arbitrarily large angles smaller than $2\pi$ given as hints.)

{\bf Our results.}
We show that if all angles given as hints are at most $\pi$, then the cost of treasure hunt can be lowered to $O(D)$, which is optimal. Our real challenge here is in the fact that hints can be angles of size  {\em exactly} $\pi$, in which case the design of a trajectory always leading to the treasure, while being cost-efficient in terms of traveled distance, is far from obvious.

 If all angles are at most $\beta$,
where $\beta<2\pi$ is a constant unknown to the agent, then we prove that the cost is at most $O(D^{2-\epsilon})$, for some $\epsilon>0$.
 Finally, we show that arbitrary angles smaller than $2\pi$ given as hints cannot be of significant help: using such hints the cost $\Theta(D^2)$
   cannot be beaten.
 
 For both our positive results we present deterministic algorithms achieving the above costs. Both algorithms work in phases ``assuming'' that the treasure is contained in increasing squares centered at the initial position of the agent.  The common principle behind both algorithms is to move the agent to strategically chosen points in the current square, depending on previously obtained hints, and sometimes perform exhaustive search of small rectangles from these points,
 in order to guarantee that the treasure is not there.
 This is done in such a way that, in a given phase, obtained hints together with small rectangles exhaustively searched, eliminate a sufficient area of the square assumed in the phase to eventually permit finding the treasure.
 
 In both algorithms, the points to which the agent travels and where it gets hints are chosen in a natural way, although very differently in each of the algorithms.
 The main difficulty is to prove that the distance travelled by the agent is within the promised cost. In the case of the first algorithm, it is possible to cheaply exclude large areas not containing the treasure, and thus find the treasure asymptotically optimally. For the second algorithm, the agent eliminates smaller areas at each time,
 due to less precise hints, and thus finding the treasure costs more.


{\bf Related work.}
The problem of treasure hunt, i.e.,  searching for an inert target by one or more mobile agents  was investigated under many different scenarios.
The environment where the treasure is hidden may be a graph or a plane, and the search may be deterministic or randomized.
An early paper \cite{BeNe} showed that the best competitive ratio for deterministic treasure hunt on a line is 9. In \cite{DFG} the authors generalized this problem, considering a model where, in addition to travel length, the cost includes a payment for every turn of the agent.
The book \cite{AG} surveys both the search for a fixed target and the related rendezvous problem, where the target and the finder are both mobile and
their role is symmetric: they both cooperate to meet. This book is concerned mostly with randomized search strategies. Randomized treasure hunt strategies for star search, where the target is on one of $m$ rays, are considered in \cite{KRT}. In \cite{MP,TSZ} the authors study relations between the problems of treasure hunt and rendezvous in graphs.  The authors of \cite{BCR} study the task of finding a fixed point on the line and in the grid, and initiate the study of the task
of searching for an unknown line in the plane. This research is continued, e.g., in \cite{JL,La2}. In \cite{SF} the authors concentrate on game-theoretic aspects of
the situation where multiple selfish pursuers compete to find a target, e.g., in a ring. The main result of \cite{La} is an optimal algorithm to sweep a plane in order to locate an unknown fixed target, where locating means to get the agent originating at point $O$ to a point $P$ such that the target is in the segment $OP$. In \cite{FHGTM} the authors consider the generalization of the search problem in the plane to the case of several searchers. Collective treasure hunt in the grid by several agents  with bounded memory is investigated in \cite{ELSUW,KLUW}. In \cite{BKR}, treasure hunt with randomly faulty hints is considered in tree networks.   
By contrast, the survey \cite{CHI} and the book \cite{BN} consider pursuit-evasion games, mostly on graphs, where pursuers try to catch a fugitive target trying to escape.

\section{Preliminaries}

Since for $D\leq 1$ treasure hunt is solved immediately, in the sequel we assume $D>1$. 
Since the agent has a compass, it can establish an orthogonal coordinate system with point $O$ with coordinates $(0,0)$ at its starting position, the $x$-axis going
East-West and the $y$-axis going North-South. Lines parallel to the  $x$-axis will be called horizontal, and lines parallel to the $y$-axis will be called vertical. When
the agent at a current point $a$ decides to go to a previously computed point $b$ (using a straight line), we describe this move simply as ``Go to $b$''. A hint given to 
the agent currently located at point $a$ is formally described as an ordered pair $(P_1,P_2)$ of half-lines originating at $a$ such that the angle clockwise from $P_1$
to $P_2$ (including $P_1$ and $P_2$) contains the treasure.

The line containing points $A$ and $B$ is denoted by $(AB)$.
A segment with extremities $A$ and $B$ is denoted by $[AB]$ and its length is denoted $|AB|$.
Throughout the paper, a polygon is defined as a closed polygon (i.e., together with the boundary). For a polygon $S$, we will denote by $\mathcal{B}(S)$ (resp. $\mathcal{I}(S)$) the boundary of $S$ (resp. the interior of $S$, i.e.,  the set $S\setminus\mathcal{B}(S)$). A rectangle is defined as a non-degenerate rectangle, i.e., 
with all sides of strictly positive length. A rectangle with vertices $A,B,C,D$ (in clockwise order) is denoted simply by $ABCD$. A rectangle is {\em straight} if one of its sides is  vertical.

In our algorithms we use the following procedure {\tt RectangleScan}$(R)$ whose aim is to traverse a closed rectangle $R$ (composed of the boundary and interior) with known coordinates, so that the agent initially situated at some point of $R$ gets at distance at most 1 from every point of it and returns to the starting point. We describe the procedure for a straight rectangle whose vertical side is not shorter than the horizontal side. The modification of the procedure for arbitrarily positioned rectangles is straightforward. Let the vertices of the rectangle $R$ be $A$, $B$, $C$ and $D$, where $A$ is the North-West vertex and the others are listed clockwise. Let $a$ be the point at which the agent starts the procedure.

The idea of the procedure is to go to vertex $A$, then make a snake-like movement in which consecutive vertical segments are separated by a distance 1,  and then go back to point $a$. The agent ignores all hints gotten during the execution of the procedure. 
Suppose that the horizontal side of $R$ has length $m$ and the vertical side has length $n$, with $n \geq m$. Let $k=\lfloor m \rfloor$. Let $a_0,a_1,\dots ,a_k$ be points on the North horizontal side of the rectangle, such that $a_0=A$ and the distance between consecutive points is 1. Let $b_0,b_1,\dots ,b_k$ be points on the South horizontal side of the rectangle, such that $b_0=D$ and the distance between consecutive points is~1.

The pseudocode of procedure {\tt RectangleScan}$(R)$ is given in Algorithm \ref{alg:rc}.

\begin{algorithm}
				\caption{Procedure {\tt RectangleScan}($R$)}
				\label{alg:rc}
				\begin{footnotesize}
				\begin{algorithmic}[1]
				\If{$k$ is odd}
					\For{$i=0$ {\bf to $k-1$} {\bf step} 2}
						\State Go to $a_i$; Go to $b_i$; 
						\State Go to $b_{i+1}$; Go to $a_{i+1}$
					\EndFor
					\State Go to $a$
				\Else
					\For{$i=0$ {\bf to $k-2$} {\bf step} 2}
						\State Go to $a_i$; Go to $b_i$; 
						\State Go to $b_{i+1}$; Go to $a_{i+1}$
					\EndFor
					\State Go to $a_k$; Go to $b_k$
					\State Go to $a$				
				\EndIf
				\end{algorithmic}
				\end{footnotesize}
			\end{algorithm}

%

\begin{proposition}\label{prelim}
For every point $p$ of the rectangle $R$, the agent is at distance at most 1 from $p$ at some time of the
execution of Procedure  {\tt RectangleScan}$(R)$.
The cost of the procedure is at most $5n \cdot \max(m,2)$, where $n \geq m$ are the lengths of the sides of the rectangle.
\end{proposition}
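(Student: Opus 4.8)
The plan is to analyze the two cases of the pseudocode directly, establishing the covering property first and the cost bound second. For the covering property: the snake-like path visits the vertical segments $[a_i b_i]$ for $i=0,1,\dots,k$, with consecutive indices joined along the horizontal sides. Since $k=\lfloor m\rfloor$, we have $k \le m < k+1$, so the last segment $[a_k b_k]$ lies within horizontal distance $m-k < 1$ of the East side $[BC]$, and consecutive segments are exactly distance $1$ apart. Every point $p=(x,y)\in R$ lies in some horizontal strip between two consecutive vertical segments (or between $[a_k b_k]$ and the East side), hence is within horizontal distance at most $1$ of one of the traversed segments $[a_i b_i]$; and since the agent traverses that whole segment from top to bottom, there is a moment when it shares the $y$-coordinate of $p$, so its Euclidean distance to $p$ is at most $1$. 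One should check the boundary segments $[AB]$ and $[DC]$ are covered (they are, being traversed) and that the two parity cases of the pseudocode both realize this traversal; the even case handles the leftover last segment $[a_k b_k]$ explicitly after the loop, the odd case absorbs it as $a_{k}=a_{k-1}$... — more precisely in the odd case the loop runs $i=0,2,\dots,k-1$ and ends having just reached $a_{k}$ after descending and re-ascending, so all of $a_0,\dots,a_k$ are covered. This case check is routine but must be done carefully to confirm no vertical segment is skipped.

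For the cost bound: the trajectory consists of (i) the initial ``Go to $A$'' move, which from a point $a\in R$ costs at most the diameter $\sqrt{m^2+n^2}\le 2n$ (since $m\le n$); (ii) the final ``Go to $a$'' move, also at most $2n$; and (iii) the snake itself. The snake descends and ascends along vertical segments of length $n$: there are $k+1$ such vertical traversals, contributing $(k+1)n$, plus $k$ horizontal hops of length $1$ along the top and bottom sides, contributing $k$. So the snake costs at most $(k+1)n + k \le (m+1)n + m$. Summing, the total is at most $(m+1)n + m + 4n = (m+5)n + m$. I then need to massage this into the claimed form $5n\cdot\max(m,2)$. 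When $m\ge 2$: $(m+5)n+m \le (m+5)n + mn/2 \cdot (2/n)\cdots$ — cleaner is to note $m \le mn$ and $(m+5)n + m \le (m+6)n$; since $m\ge 2$ is not quite enough for $(m+6)\le 5m$, one instead bounds more tightly, e.g. the number of vertical traversals is $k+1 \le m+1$ and for $m\ge 2$ the horizontal hops number $k\le m$, giving snake cost $\le (m+1)n+m$; with the two end moves this is $\le (m+1)n + m + 4n$, and for $m\ge 2$, $n\ge m\ge 2$ yields $m\le n \le (m/2)\cdot n$ wait that needs $m\ge 2$... — the point is that each additive term ($m$, $4n$, the $+1$ in $m+1$) is absorbed into the slack between $(m+1)n$ and $5mn$. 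When $m<2$: then $k=\lfloor m\rfloor \le 1$, so there are at most $2$ vertical traversals and at most $1$ horizontal hop, snake cost $\le 2n+1$, plus end moves $\le 4n$, total $\le 6n+1 \le 10n = 5n\cdot 2$.

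The main obstacle I anticipate is purely bookkeeping: getting the constant to come out as exactly $5n\cdot\max(m,2)$ rather than something slightly larger requires being a little careful about which moves are actually made in each parity case (the odd case makes one fewer ``leftover'' move than a naive count suggests) and about folding the lower-order terms $+m$ and the $2n+2n$ from the entry/exit moves into the main $\Theta(mn)$ term. There is no conceptual difficulty — it is a finite explicit path — so the write-up should just present the covering argument via horizontal strips, then tally the path length case by case, and verify the inequality $(m+1)n + m + 4n \le 5n\max(m,2)$ holds in both regimes $m\ge 2$ and $m<2$.
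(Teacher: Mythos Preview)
Your approach is essentially the same as the paper's: both argue coverage from the union of vertical segments $[a_i b_i]$, and both bound the cost as (vertical moves) $+$ (horizontal hops) $+$ (entry and exit moves), arriving at $(m+1)n + m + O(n)$. The paper simply bounds the entry/exit moves by $2(m+n)$ rather than $4n$, collapses everything to $mn+6n$, and then checks $mn+6n \le 5n\max(m,2)$ directly in the two regimes---which works exactly as you outlined. Your hesitation that ``$m\ge 2$ is not quite enough for $(m+6)\le 5m$'' is unfounded: at $m=2$ this reads $8\le 10$, so the inequality holds for all $m\ge 2$ (indeed for all $m\ge 3/2$), and your bound $(m+5)n+m \le (m+6)n \le 5mn$ goes through cleanly in that case.
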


\begin{proof}
During the execution of Procedure  {\tt RectangleScan}$(R)$ the agent traverses all segments $[a_i,b_i]$, for $i=0,1,\dots,k$.
Every point of $R$ is at distance at most 1 from some point of this union. This proves the first assertion. 
The cost of vertical moves is upper bounded by $(m+1)n$, the cost of horizontal moves is upper bounded by $m$, and the cost of getting from $a$ to $A$ and of returning back to $a$ after the scan is upper bounded
by $2(m+n)$. Hence the total cost of  procedure {\tt RectangleScan}$(R)$ is at most
$(m+1)n+m+2(m+n)\leq mn+6n\leq 5n \cdot \max(m,2)$.
\end{proof}

\section{Angles at most $\pi$}

In this section we consider the case when all angles given as hints are at most $\pi$. Without loss of generality we can assume that they are all equal to $\pi$,
completing any smaller angle to $\pi$ in an arbitrary way: this makes the situation even harder for the agent, as hints become less precise. For such hints we show  Algorithm {\tt TreasureHunt1} that finds the treasure at cost $O(D)$. This is of course optimal, as the treasure can be at any point at distance at most $D$ from the starting point of the agent.

For angles of size $\pi$, every hint is in fact a half-plane whose boundary line $L$ contains the current location of the agent. For simplicity, we will code such a hint as
$(L,right)$ or $(L,left)$, whenever the line $L$ is not horizontal, depending on whether the indicated half-plane is to the right (i.e., East) or to the left (i.e., West) of $L$.
For any non-horizontal line $L$ this is non-ambiguous. Likewise, when $L$ is horizontal, we will code a hint as $(L,up)$ or $(L,down)$,  depending on whether the indicated half-plane is up (i.e., North) from $L$ or down (i.e., South) from $L$.

In view of the work on $\phi$-self-approaching curves (cf. \cite{AA}) we first note that there is a big difference of difficulty between obtaining our result in the case when angles given as hints are bounded by some angle $\phi_0$ strictly smaller than $\pi$ and when they are {\em at most} $\pi$, as we assume. A $\phi$-self-approaching curve is a planar oriented curve such that, for each point B on the curve, the rest of the curve lies inside a wedge of angle $\phi$ with apex in B. In \cite{AA}, the authors prove the following property of these curves: for every $\phi<\pi$ there exists a constant $c(\phi)$ such that the length of any $\phi$-self-approaching curve is at most $c(\phi)$ times the distance $D$ between its endpoints. Hence, for hints bounded by some angle $\phi_0$ strictly smaller than $\pi$, our result could possibly be derived from the existing literature: roughly speaking, the agent should follow a trajectory corresponding to any $\phi_0$-self-approaching curve to find the treasure at a cost linear in $D$. Even then, transforming the continuous scenario of self-approaching curves to our discrete scenario presents some difficulties. However, the crucial problem is this: the constant
$c(\phi)$ from \cite{AA} diverges to infinity as $\phi$ approaches $\pi$, hence the result from \cite{AA} cannot be used when hints are arbitrary angles smaller than $\pi$. Moreover,
the result of \cite{AA} holds only when $\phi<\pi$ (the authors also emphasize that for each $\phi\geq\pi$, the property is false), and thus the above derivation is no longer possible for our purpose when $\phi=\pi$. Actually, this is the real difficulty of our problem: handling angles equal to $\pi$, i.e., half-planes.

We further observe that a rather straightforward treasure hunt algorithm of cost $O(D\log D)$, for hints being angles of size $\pi$, can be obtained using an immediate corollary of a theorem proven in \cite{G} by Gr\"unbaum: each line passing through the centroid of a convex polygon cuts the polygon into two convex polygons with areas differing by a factor of at most $\frac{5}{4}$. Suppose for simplicity that $D$ is known. Starting from the square of side length $2D$, centered at the initial position of the agent, this permits to reduce the search area from $P$ to at most $\frac{5P}{9}$ in a single move. Hence, after $O(\log D)$ moves, the search area is small enough to be exhaustively searched by 
procedure {\tt RectangleScan} at cost $O(D)$. However, the cost of each move during the reduction is not under control and can be only bounded by a constant multiple of $D$, thus giving the total cost bound $O(D\log D)$. By contrast, our algorithm controls both the remaining search area and the cost incurred in each move, yielding the optimal cost $O(D)$.

\subsection{High level idea of the algorithm}

In Algorithm {\tt TreasureHunt1} the agent acts in phases $j=1,2,3,\ldots$ where in each phase $j$ the agent ``supposes'' that the treasure is in a straight square $R_j$ centered at the initial position
of the agent, and of side length $2^j$. When executing a phase $j$, the agent successively moves to distinct points with the aim of using the hints at these points to narrow the search area that initially corresponds to $R_j$. In our algorithm, this narrowing is made in such a way that the remaining search area is always a straight rectangle. Often this straight rectangle is a strict superset of the intersection of all hints that the agent was given previously. This would seem to be a waste, as we are searching some areas that have been previously excluded. However, this loss is compensated by the ease of searching description and subsequent analysis of the algorithm, due to the fact that, at each stage, the search area is very regular.   

During a phase, the agent proceeds to successive reductions of the search area by moving to distinct locations, until it obtains a rectangular search area that is small enough to be searched directly at low cost using procedure {\tt RectangleScan}. In our algorithm, such a final execution of {\tt RectangleScan} in a phase is triggered as soon as the rectangle has a side smaller than $4$. If the treasure is not found by the end of this execution of procedure {\tt RectangleScan}, the agent learns that the treasure cannot be in the supposed straight square $R_j$ and starts the next phase from scratch by forgetting all previously received hints. This forgetting again simplifies subsequent analysis. The algorithm terminates at the latest by the end of phase $j_0=\lceil \log_2 D\rceil+1$, in which the supposed straight square $R_{j_0}$ is large enough to contain the treasure. 
Hence, if the cost of a phase $j$ is linear in $2^j$, then the cost of the overall solution is linear in the distance $D$.

In order to give the reader deeper insights in the reasons why our solution is valid and has linear cost, we need to give more precise explanations on how the search area  is reduced during a given phase $j\geq2$ (when $j=1$, the agent makes no reduction and directly scans the small search area using procedure {\tt RectangleScan}). Suppose that in phase $j\geq2$ the agent is at the center $p$ of a search area corresponding to a straight rectangle $R$, every side of which has length between $4$ and $2^j$ (note that this is the case at the beginning of the phase), and denote by $A,B,C$ and $D$ the vertices of $R$ starting from the top left corner and going clockwise. In order to reduce rectangle $R$, the agent uses the hint at point $p$. The obtained hint denoted by $(L_1,x_1)$ can be of two types: either a {\em good} hint or a {\em bad} hint. A good hint is a hint whose line $L_1$ divides one of the sides of $R$ into two segments such that the length $y$ of the smaller one is at least $1$. A bad hint is a hint that is not good.

If the received hint $(L_1,x_1)$ is good, then the agent narrows the search area to a rectangle $R'\subset R$ having the following three properties:
\begin{enumerate}
\item $R\setminus R'$ does not contain the treasure.
\item The difference between the perimeters of $R$ and $R'$ is $2y\geq2$.
\item The distance from $p$ to the center of $R'$ is exactly $\frac{y}{2}$.
\end{enumerate}
and then moves to the center of $R'$.

An illustration of such a reduction is depicted in Figure~\ref{fig:gb}(a). The reduced search area $R'$ is the rectangle $ABde$. 

\begin{figure}[!htbp]
\begin{center}
  \begin{minipage}[t]{0.49\linewidth}
    \centering
	\includegraphics[width=1\textwidth]{./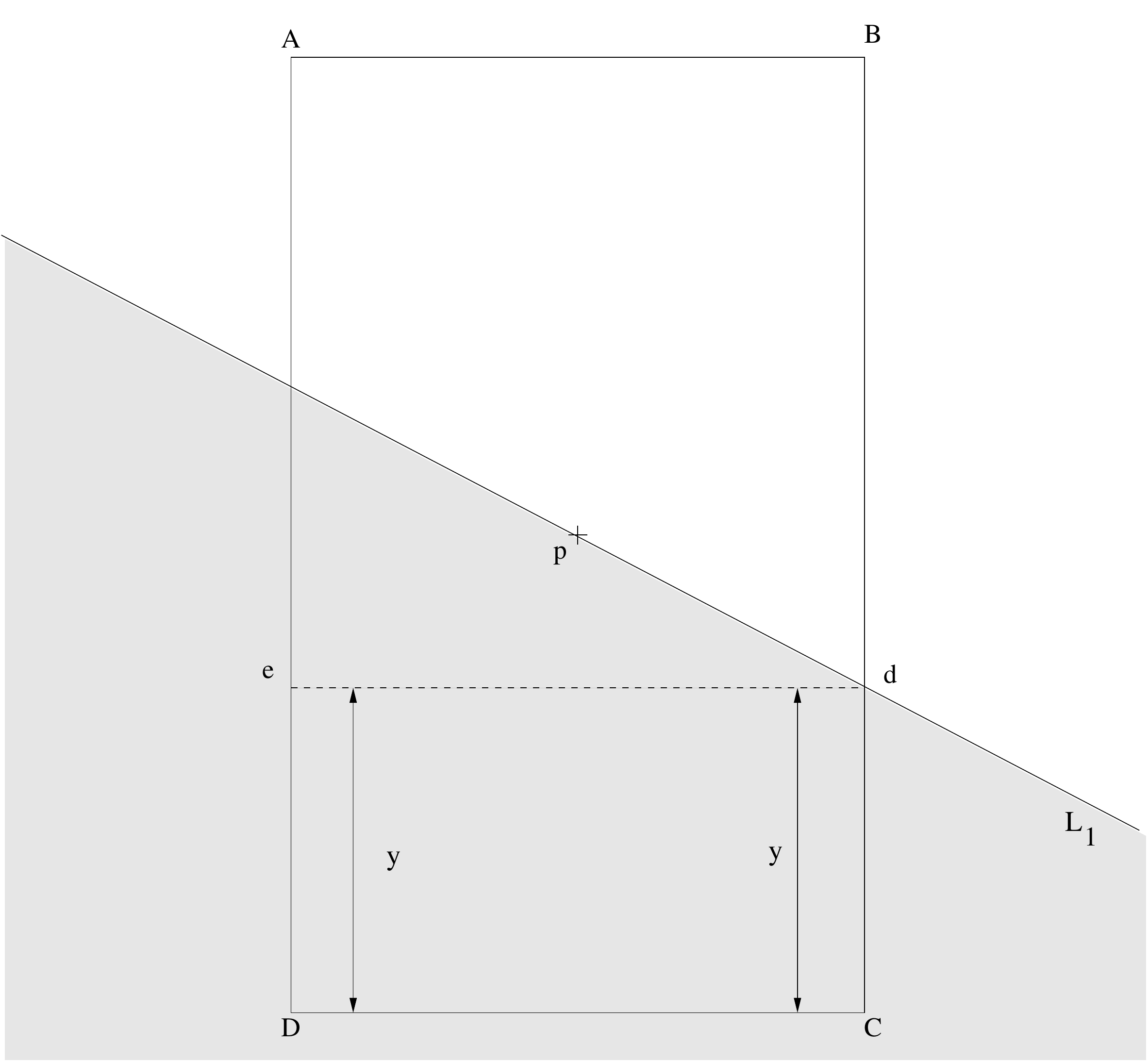}\\
    {\footnotesize ($a$) A good hint $(L_1,right)$}
  \end{minipage}
  \begin{minipage}[t]{0.49\linewidth}
    \centering
	\includegraphics[width=0.65\textwidth]{./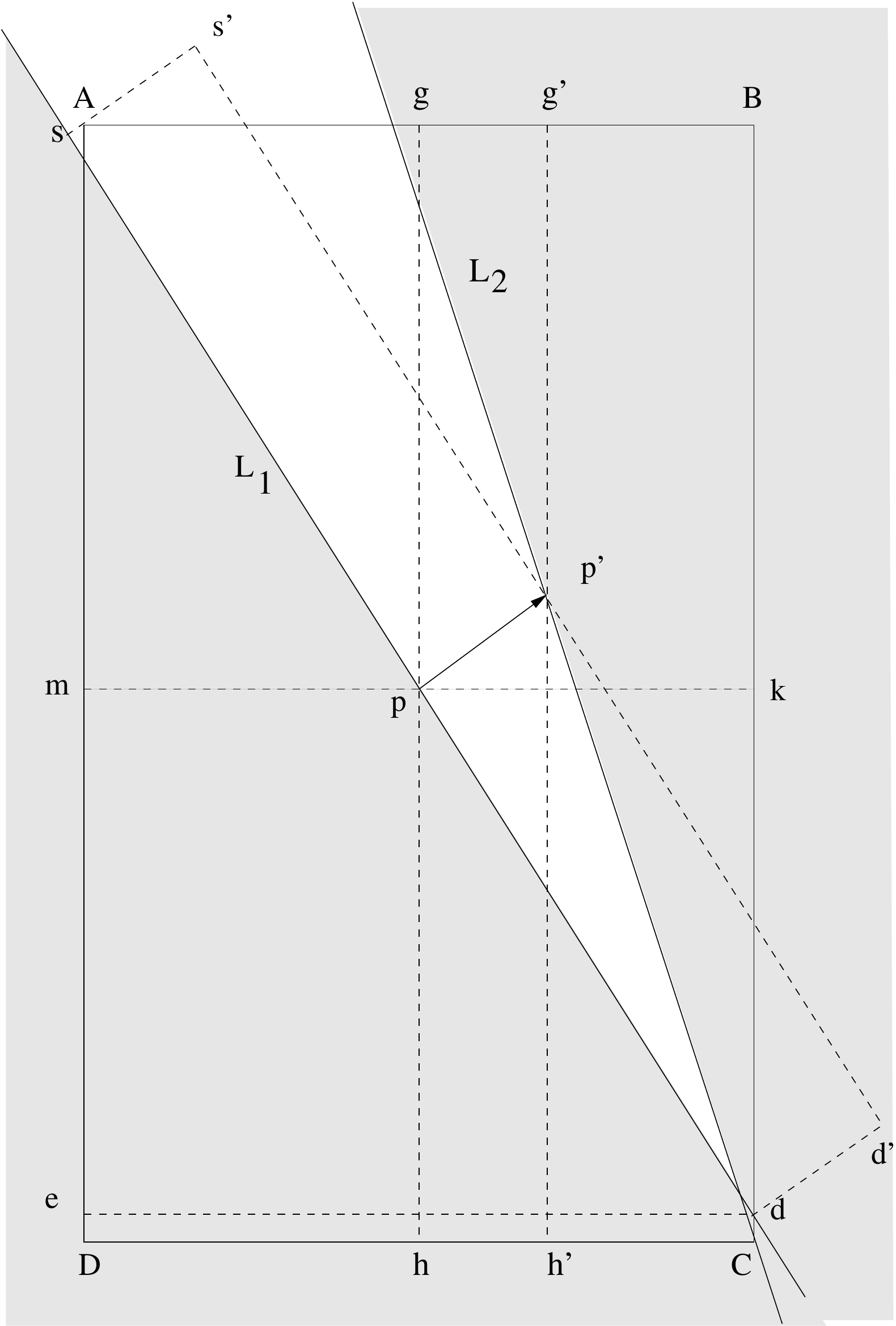}\\
    {\footnotesize ($b$) A bad hint $(L_1,right)$}
  \end{minipage}
\end{center}
 \caption{ In Figure (a)  the agent received a good hint  $(L_1,right)$ at the point $p$ of a rectangular search area $ABCD$. In Figure (b) it received  a bad hint
 $(L_1,right)$ at the point $p$ and hence it moved to point $p'$ and got a hint  $(L_2,left)$. In both figures the excluded half-planes are shaded.}
\label{fig:gb}
\end{figure}


If the agent receives a bad hint, say $(L_1,right)$, at the center of a rectangular search area $R$, we cannot apply the same method as the one used for a good hint: this is the reason for the distinction between good and bad hints. If we applied the same method as before, we could obtain a rectangular search area $R'$ such that the difference between the perimeters of $R$ and $R'$ is at least $2y$. However, in the context of a bad hint, the difference $2y$ may be very small (even null), and hence there is no significant reduction of the search area.  In order to tackle this problem, when getting a bad hint at the center $p$ of $R$, the agent moves to another point $p'$ which is situated in the half-plane $(L_1,right)$ at distance 2 from $p$, perpendicularly to $L_1$. This point $p'$ is chosen in such a way that, regardless of what is the second hint, we can ensure that two important properties described below are satisfied.

The first property is that by combining the two hints, the agent can decrease the search area to a rectangle $R'\subset R$ whose perimeter is smaller by $2$ compared to the perimeter of $R$, as it is the case for a good hint, and such that $R\setminus R'$ does not contain the treasure. This decrease follows either directly from the pair of hints, or indirectly after having scanned some relatively small rectangles using procedure {\tt RectangleScan}. In the example depicted in Fig. \ref{fig:gb} (b), after getting the second hint $(L_2,left)$,  the agent executes procedure 
{\tt RectangleScan}$(ss'd'd)$ followed by {\tt RectangleScan}$(gg'h'h)$ and moves to the center of the new search area $R'$ that is the rectangle $Agpm$. Note that the part of $R'$ not excluded by the two hints
and by the procedure 
{\tt RectangleScan} executed in rectangles $ss'd'd$ and  $gg'h'h$ is only the small quadrilateral bounded by line $L_2$ and the segments $[AB]$, $[s'd']$ and $[gh]$. However, in order to preserve the homogeneity of the process, we consider the entire new search area
$R'$ which is a straight rectangle whose perimeter is smaller by at least 2, compared to that from $R$. This follows from the fact that no side of $R$ has length smaller than 4. 
The agent finally moves to the center of $R'$. 

The second property is that all of this (i.e., the move from $p$ to $p'$, the possible scans of small rectangles and finally the move to the center of $R'$) is done at a cost linear in the difference of perimeters of $R$ and $R'$, as shown in Lemma~\ref{lem:rr}. The two properties together ensure that, even with bad hints, the agent manages to reduce the search area in a significant way and at a small cost.
So, regardless of whether hints are good or not, we can show that the cost of phase $j$ is in $\mathcal{O}(2^j)$ and the treasure is found during this phase if the initial square is large enough.
The difficulty of the solution is in showing that the moves prescribed by our algorithm in the case of bad hints guarantee the two above properties, and thus ensure the correctness of the algorithm and the cost linear in $D$.


\subsection{Algorithm and analysis}

In this subsection we describe our algorithm in detail, prove its correctness and analyze its complexity. Due to many possible positions of the line $L$ from the hint
$(L,x)$ obtained by the agent (the line $L$ cutting horizontal or vertical sides of the current search area, the slope of $L$ being positive or negative, and $x$ being $right$,
$left$, $up$ or $down$), there are many cases that the algorithm should consider. However, many of these cases can be treated similarly to one another, due to symmetry considerations. Hence, in order to reduce the number of cases,  we introduce some geometric transformations that enable us to consider only one representative case in each class. This case will be called a basic configuration.

We define a {\em configuration} as a couple $(R,(L,x))$, where $R$ is a straight rectangle,
and $(L,x)$ is a hint, i.e., a half-plane such that the line $L$ contains the center of $R$.

A configuration  $(R,(L,x))$ is called {\em lying} iff the line $L$ passes through a point that is in the interior of a vertical side of $R$. A configuration that is not lying is called {\em standing}.
A configuration  $(R,(L,x))$ is called {\em perfect} iff $L$ is horizontal or vertical. A configuration that is not perfect is called {\em imperfect}.

A perfect lying (resp. standing) configuration $(R,(L,x))$ can be of two types:
\begin{itemize}
\item {\bf Type 1.} $x=up$ (resp. $x=left$)
\item {\bf Type 2.} $x=down$ ( resp. $x=right$)
\end{itemize}


An imperfect configuration $(R,(L,x))$ can be of four types:
\begin{itemize}
\item {\bf Type 1.} The slope of $L$ is negative and $x=right$
\item {\bf Type 2.} The slope of $L$ is negative and $x=left$
\item {\bf Type 3.} The slope of $L$ is positive and $x=right$
\item {\bf Type 4.} The slope of $L$ is positive and $x=left$
\end{itemize}

The following proposition follows immediately from the above definitions.

\begin{proposition}
\label{pro:plat1}
For every configuration, there exists a unique positive integer $i\leq4$ such that this configuration is a perfect or imperfect configuration of type $i$.\end{proposition}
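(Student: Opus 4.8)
The plan is a direct case analysis on the hint line $L$, organized first by whether the configuration is perfect or imperfect, and then, in the perfect case, by whether $L$ is horizontal or vertical. Both existence and uniqueness should follow from the observation that the dichotomies involved (perfect versus imperfect, lying versus standing, sign of the slope of $L$, value of $x$) each partition the configurations into disjoint classes.

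I would first dispose of the cross-case part of uniqueness: by definition ``imperfect'' means ``not perfect'' and ``standing'' means ``not lying'', so no configuration is simultaneously perfect and imperfect, nor simultaneously lying and standing. Consequently it suffices to prove that every perfect configuration has exactly one type in $\{1,2\}$, that every imperfect configuration has exactly one type in $\{1,2,3,4\}$, and that every configuration is perfect or imperfect --- the last being immediate from the definition.

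For a perfect configuration $(R,(L,x))$ the line $L$ is horizontal or vertical, and, $R$ being non-degenerate, it cannot be both. The key remark is that this alternative coincides with the lying/standing alternative. Indeed, since $L$ contains the center of the straight rectangle $R$: if $L$ is horizontal it meets each of the two vertical sides of $R$ at its midpoint, which is an interior point of that side because the side has strictly positive length, so the configuration is lying; if $L$ is vertical it is parallel to and distinct from each vertical side of $R$ (the horizontal sides having strictly positive length), hence meets no vertical side and the configuration is standing. In the horizontal (hence lying) case the coding convention for hints forces $x \in \{up,down\}$, and by definition the configuration has Type $1$ iff $x=up$ and Type $2$ iff $x=down$ --- mutually exclusive and jointly exhaustive. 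In the vertical (hence standing) case the convention forces $x \in \{left,right\}$, and the configuration has Type $1$ iff $x=left$ and Type $2$ iff $x=right$ --- again exactly one. So a perfect configuration has precisely one type.

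For an imperfect configuration $L$ is neither horizontal nor vertical, so it has a well-defined finite slope that is nonzero; this slope is positive or negative but not both, and since $L$ is not horizontal the convention assigns $x$ exactly one value in $\{left,right\}$. The four pairs (sign of slope, value of $x$) are pairwise disjoint, exhaust all imperfect configurations, and are precisely Types $1$ through $4$; hence an imperfect configuration has exactly one type. Putting the two cases together proves the proposition. No step here is genuinely difficult; the only point that warrants care is the identification in the perfect case of ``$L$ horizontal versus vertical'' with ``lying versus standing'', which is where non-degeneracy of $R$ is used.
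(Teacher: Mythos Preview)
Your proof is correct and is essentially a spelled-out version of what the paper does: the paper simply asserts that the proposition ``follows immediately from the above definitions'' and gives no further argument. The one point you make explicit that the paper leaves implicit is the identification, in the perfect case, of ``$L$ horizontal'' with ``lying'' and ``$L$ vertical'' with ``standing'' (using non-degeneracy of $R$), which is indeed needed for the perfect type labels to apply unambiguously.
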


A configuration $(R,(L,x))$ is called {\em critical} iff the line $L$ divides a side of $R$ into two parts such that the length of the smaller part is less than $1$ (possibly $0$).

We will denote by $Rot_{v,\alpha}$ the rotation by the angle $\alpha$ with center $v$, and by $Sym_P$ the axial symmetry with axis $P$. 

The set of all configurations is denoted by $\mathcal{C}$. Given a configuration $(R,(L,x))$, we denote by $r$ and $H$, respectively, the center of $R$ and the vertical line passing through $r$. For every $i\in\{0,1,2,3\}$, we define the following functions that are intuitively rotations and axial symmetries of configurations.

$\sigma_i: \mathcal{C} \rightarrow \mathcal{C}$ is defined by the formula
$\sigma_i((R,(L,x))) =(Rot_{r,\frac{i\pi}{2}}(R),Rot_{r,\frac{i\pi}{2}}((L,x)))$

$\rho: \mathcal{C} \rightarrow \mathcal{C}$ is defined by the formula
$\rho( (R,(L,x)))=(Sym_H(R),Sym_H((L,x)))$

Using the above functions, we now define the following eight {\em elementary transformations} $\phi_i: \mathcal{C} \rightarrow \mathcal{C}$, for $i\in\{0,\dots, 7\}$.

For $i\in\{0,1,2,3\}$, we have  $\phi_i((R,(L,x)))= \sigma_i((R,(L,x)))$.\\
For $i\in\{4,5,6,7\}$, we have  $\phi_i((R,(L,x)))=  \rho(\sigma_{i-4}((R,(L,x))))$.

We say that a configuration is {\em  basic} iff it is either a lying perfect configuration of type 1 or a lying imperfect configuration of type 1.

The following proposition asserts that from every configuration we can obtain a basic configuration by at least one of the elementary transformations. This follows directly from the definitions. 

\begin{proposition}
\label{pro:plat2}
For every configuration $(R,(L,x))$, there exists $i\in\{0,\ldots,7\}$ and a basic configuration $(R',(L',x'))$ such that $(R',(L',x'))=\phi_i((R,(L,x)))$ 
\end{proposition}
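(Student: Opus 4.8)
The plan is to prove this by a finite case analysis organized around the group structure of the eight elementary transformations. The first observation is that each $\sigma_i$ acts on a configuration as the rotation $Rot_{r,i\pi/2}$ about the centre $r$ of the rectangle, and $\rho$ acts as the reflection $Sym_H$ in the vertical line through $r$; since both fix $r$, the maps $\phi_0,\dots,\phi_7$ compose as the dihedral group $D_4$ of symmetries of the square centred at $r$. Each $\phi_i$ sends a straight rectangle to a straight rectangle (a quarter turn interchanges the roles of the vertical and horizontal sides, but the image still has a vertical side) and fixes the centre, so it maps $\mathcal{C}$ into $\mathcal{C}$; moreover it sends horizontal or vertical lines to horizontal or vertical lines, and lines of nonzero finite slope to lines of nonzero finite slope. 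Hence the classes of perfect and of imperfect configurations are each invariant under all $\phi_i$, and it suffices to show, separately in each class, that the $D_4$-orbit of an arbitrary configuration meets the set of basic configurations.

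For perfect configurations one uses that a lying perfect configuration has $L$ horizontal and a standing perfect one has $L$ vertical, so up to the discrete data there are exactly four cases, the basic one being ``$L$ horizontal, $x=up$''. A direct check gives: $\phi_0$ fixes the basic configuration; $\phi_2=\sigma_2$, the half turn, carries ``$L$ horizontal, $x=down$'' to it; and the quarter turns $\phi_1$ and $\phi_3$ carry ``$L$ vertical, $x=right$'' and ``$L$ vertical, $x=left$'', respectively, to it. In each case one simply follows what happens to the designated half-plane under the rotation.

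For imperfect configurations, encode a configuration by three binary labels: lying versus standing; slope negative versus positive; and $x=right$ versus $x=left$ (equivalently, the designated half-plane being East or West of $L$). The basic configurations are precisely those labelled (lying, negative, right). One then computes how the generators act on this triple: $\sigma_2$ fixes the first two labels and flips the third; $\rho$ fixes the first and flips the last two (reflection in a vertical axis reverses slopes and swaps East and West); and $\sigma_1$ flips the first two labels, with an effect on the third label that must be worked out from the conventions, since for a tilted line one has to re-express the rotated half-plane (now described by North/South) in terms of East/West relative to the new, oppositely tilted line. From these formulas one sees that the subgroup $\{\phi_0,\phi_2,\phi_4,\phi_6\}=\langle\sigma_2,\rho\rangle$ acts transitively on the four possible values of the (slope, $x$) pair while leaving the lying/standing label fixed, and that $\phi_1=\sigma_1$ turns a standing configuration into a lying one. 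Consequently: if the given configuration is lying, a suitable element of $\{\phi_0,\phi_2,\phi_4,\phi_6\}$ already maps it to the basic configuration; if it is standing, first apply $\phi_1$ and then the appropriate element of that subgroup, the composition being one of $\phi_1,\phi_3,\phi_5,\phi_7$. Either way some $\phi_i$ maps the configuration to a basic one, and the explicit eight-entry table can be read off from this argument.

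The only genuine work, and the point at which to be careful, is the computation of how $\sigma_1$ acts on the $right/left$ label: this label is defined relative to $L$ itself, and a quarter turn both tilts $L$ the opposite way and rotates the designated half-plane, so keeping the bookkeeping straight — and being consistent about which of $\sigma_1,\sigma_3$ is the clockwise quarter turn — is what makes the table correct. Everything else follows immediately from the definitions.
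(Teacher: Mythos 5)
Your proposal is the direct verification that the paper leaves implicit (its stated proof is just that the claim ``follows directly from the definitions''), and the dihedral-group bookkeeping is essentially right: each $\phi_i$ maps straight rectangles to straight rectangles and fixes the center, $\sigma_2$ and $\rho$ preserve the lying/standing status while $\{\phi_0,\phi_2,\phi_4,\phi_6\}$ acts transitively on the four (slope, $x$) combinations, and composing the needed element of this subgroup with $\sigma_1$ indeed gives one of $\phi_1,\phi_3,\phi_5,\phi_7$. The perfect case is also handled correctly (up to the clockwise/counterclockwise convention, which you rightly flag).

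There is, however, one asserted step that is not universally true, and it is exactly the delicate boundary case: the claim that $\phi_1=\sigma_1$ turns every standing configuration into a lying one. Since $L$ contains the center of $R$, it meets the boundary of $R$ in a centrally symmetric pair of points, so there are three possibilities: both points in the interiors of the vertical sides (lying), both in the interiors of the horizontal sides (standing, and here a quarter turn does produce a lying configuration), or both at opposite corners of $R$ (e.g., $L$ a diagonal of a square). In the last case the configuration is standing under the paper's definition (a corner is not in the interior of a vertical side), and every $\phi_i$ sends corner-crossings to corner-crossings of the transformed rectangle, so the entire orbit remains standing and never becomes basic under the literal definitions; your recipe ``if standing, first apply $\phi_1$'' fails there. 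To be fair, the paper's one-line proof and its definition of ``lying'' gloss over the same degenerate case; to make the proposition true one must treat lines through two opposite corners as lying (or handle them separately), after which your table gives the claim for them using $\{\phi_0,\phi_2,\phi_4,\phi_6\}$ alone. Apart from this boundary case, your argument is correct and is the intended one.
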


For every configuration, the elementary transformation with the smallest index $i$ for which the above proposition is true will be called the \emph{basic transformation} of this configuration.

Note that, by applying to a configuration $(R,(L,x))$ its basic transformation $\phi_k$ in order to obtain $(R',(L',x'))=\phi_k((R,(L,x)))$ , each point $s$ of $(L,x)$ is rotated and possibly symmetrically reflected to obtain a new point $s'$ in  $(L',x')$. By a slight abuse of notation we will write $s'= \phi_k(s)$ and $s= \phi^{-1}_k(s')$,  and, more generally, for any set of points $S$, we will write $S'= \phi_k(S)$ and $S= \phi^{-1}_k(S')$.

Algorithm \ref{alg:plat} gives a pseudo-code of our main algorithm. It uses the function {\tt ReduceRectangle} described in Algorithm \ref{alg:RR}  that is the key technical tool permitting the agent to reduce its search area. The agent interrupts the execution of Algorithm  \ref{alg:plat} as soon as it gets at distance 1 from the treasure, at which point it can ``see'' it and thus treasure hunt stops.

\begin{figure}[httb!]
	\begin{center}
	\includegraphics[width=0.5\textwidth]{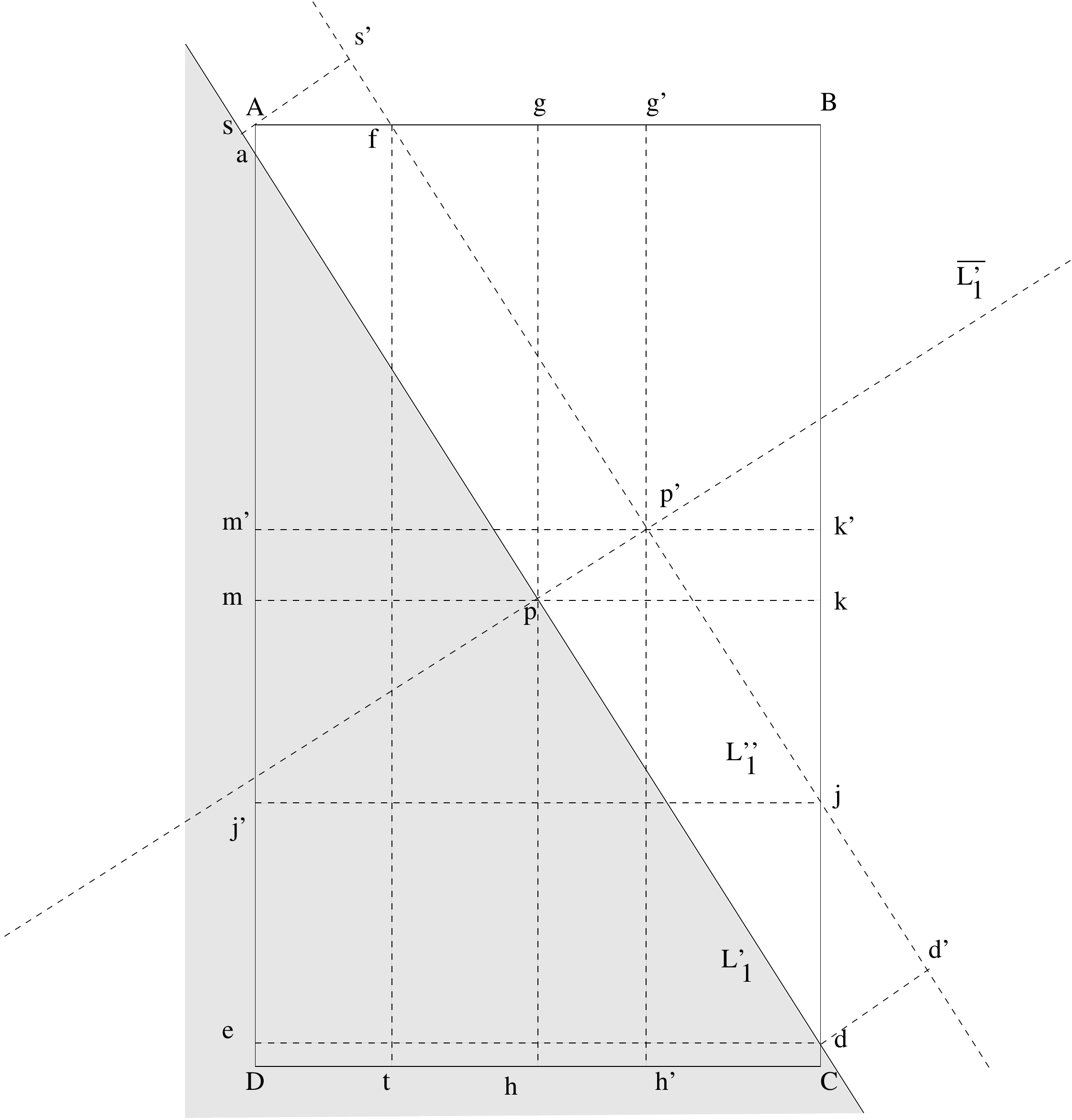}
	\caption{Illustration of the geometric objects used in Algorithm \ref{alg:RR} and in the proof of Lemma \ref{lem:rr}. We show an example of a basic configuration $(R',(L_1',x_1'))$ that is critical, in which $R'$ is the rectangle $ABCD$ and $x'_1=right$. We also show projections and intersections points defined in Algorithm \ref{alg:RR}.
	The excluded area is shaded.}
	\label{fig:f1}
	\end{center}
\end{figure}

\begin{algorithm}
	\caption{{\tt TreasureHunt1}}
	\label{alg:plat}
	\begin{footnotesize}
	\begin{algorithmic}[1]
		\State $O$:= the initial position of the agent
		\State $i$:=1
		\Loop 
    			\State $R_i$:= the straight square centered at $O$ with sides of length $2^i$\label{rr:1}
			\While{$R_i$ has no side with length smaller than $4$}\label{rr:condition}
				\State $R_i$:={\tt ReduceRectangle}($R_i$)\label{l:rr}
				\EndWhile\label{rr:2}
			\State Execute {\tt RectangleScan}($R_i$)\label{rr:3}
			\State Go to $O$\label{rr:4}
			\State $i$:=$i+1$
		\EndLoop	
	\end{algorithmic}
	\end{footnotesize}
\end{algorithm}

\begin{algorithm}
	\caption{Function {\tt ReduceRectangle}($R$)}
	\label{alg:RR}
		\begin{footnotesize}
	\begin{algorithmic}[1]
		\State $p:=$ the center of rectangle $R$		
		\State Let $(L_1,x_1)$ and $\phi_k$ be respectively the hint obtained at $p$ and the basic transformation of $(R,(L_1,x_1))$\label{l:basic}
		\State Let $(R',(L_1',x_1'))$ be the configuration such that $(R',(L_1',x_1'))=\phi_k((R,(L_1,x_1)))$\label{l:R'}
		\State Let $A$,$B$,$C$ and $D$ be the vertices of $R'$ in clockwise order, starting from the top-left corner\label{l:basicdef1}
		\State Let $a$ (resp. $d$) be the intersection between $L_1'$ and $(AD)$ (resp. $(BC)$)
		\State Let $e$ be the orthogonal projection of $d$ onto segment $[AD]$\label{l:basicdef2}	 
		\If{$(R',(L_1',x_1'))$ is not critical}
			\State NewRectangle:= the rectangle $ABde$
		\Else
			\State Let $\overline{L_1'}$ be the line that is  perpendicular to $L_1'$\label{l:consis1}
			\State Let $p'$ be the point at distance $2$ from $p$ in $\overline{L_1'}\cap(L_1',x')$\label{l:p'}
			\State Let $L_1''$ be the parallel line to $L_1'$ passing through $p'$
			\State Let $f$ (resp. $j$) be the intersection of $L_1''$ and segment $[AB]$ (resp. segment $[BC]$)
			\State Let $j'$ be the orthogonal projection of $j$ onto segment $[AD]$
			\State Let $t$ be the orthogonal projection of $f$ onto segment $[DC]$
			\State Let $m'$ (resp. $k'$) be the orthogonal projection of $p'$ onto segment $[AD]$ (resp. $[BC]$)
			\State Let $m$ (resp. $k$) be the orthogonal projection of $p$ onto segment $[AD]$ (resp. $[BC]$)
			\State Let $g'$ (resp. $h'$) be the orthogonal projection of $p'$ onto segment $[AB]$ (resp. $[DC]$)
			\State Let $g$ (resp. $h$) be the orthogonal projection of $p$ onto segment $[AB]$ (resp. $[DC]$)
			\State Let $s$ (resp. $s'$) be the orthogonal projection of $A$ onto line $L_1'$ (resp. $L_1''$)
			\State Let $d'$  be the orthogonal projection of $d$ onto line $L_1''$\label{l:consis2}
			\State Go to $\phi^{-1}_k(p')$ \label{l:move2}
			\State Let $(L_2,x_2)$ be the hint obtained at $\phi^{-1}_k(p')$ and let $(L'_2,x'_2)=\phi^{-1}_k((L_2,x_2))$
			\If{$x'_2=right$ and $L'_2$ is clockwise between $L_1''$ (included) and $(pp')$ (excluded)}\label{l:debuts}
			\State NewRectangle:= the rectangle $fBCt$
			\EndIf
			\If{$x'_2=right$ and $L'_2$ is clockwise between $(pp')$ (included) and $(m'k')$ (excluded)}
			\State Execute {\tt RectangleScan}($\phi^{-1}_k(m'k'km)$)
			\State NewRectangle:= the rectangle $gBCh$
			\EndIf

			\If{$x'_2$ $\in\{down,left\}$ and $L'_2$ is clockwise between $(m'k')$ (included) and $L_1''$ (excluded)}
			\State Execute {\tt RectangleScan}($\phi^{-1}_k(ss'd'd)$)
			\State Execute {\tt RectangleScan}($\phi^{-1}_k(m'k'km)$)
			\State NewRectangle:= the rectangle $pkCh$
			\EndIf 
			
			\If{$x'_2=left$ and $L'_2$ is clockwise between $L_1''$ (included) and $(g'h')$ (excluded)}
			\State Execute {\tt RectangleScan}($\phi^{-1}_k(ss'd'd)$)
			\State Execute {\tt RectangleScan}($\phi^{-1}_k(gg'h'h)$)
			\State NewRectangle:= the rectangle $Agpm$
			\EndIf 

			\If{($x'_2=left$ and $L'_2$ is clockwise between $(g'h')$ (included) and $(pp')$ (excluded)) {\bf or} ($x'_2=left$ and $L'_2$ is clockwise between $(pp')$ (included) and $(m'k')$ (excluded)) {\bf or} ($x'_2 \in \{up,right\}$ and $L'_2$ is clockwise between $(m'k')$ (included) and $(p'k)$ (excluded))}\label{l:toobig}
			\State Execute {\tt RectangleScan}($\phi^{-1}_k(gg'h'h)$)
			\State NewRectangle:= the rectangle $ABkm$
			\EndIf
			\If{$x'_2=right$ and $L'_2$ is clockwise between $(p'k)$ (included) and $L_1''$ (excluded)}\label{l:fins}
			\State NewRectangle:= the rectangle $ABjj'$
			\EndIf
		\EndIf
			\State Let $o'$ be the center of NewRectangle\label{l:movefin1}
			\State Go to $\phi^{-1}_k(o')$
			\State \Return  $\phi^{-1}_k(NewRectangle)$\label{l:finplat}		
	\end{algorithmic}
		\end{footnotesize}
\end{algorithm}


\pagebreak

We now proceed to the proof of correctness and the complexity analysis of our algorithm.
In the following lemma, for every rectangle $R$, the function $Perimeter(R)$ returns the perimeter of the rectangle $R$.

\begin{lemma}
\label{lem:rr}
Let $R$ be a straight rectangle with no side of length less than $4$. If the agent executes {\tt ReduceRectangle}($R$) from the center of $R$, then at the end of this execution the following properties are satisfied.

\begin{enumerate}
\item The function {\tt ReduceRectangle}($R$) returns a straight rectangle $Rec$, such that $Rec\subset R$, and either $R\setminus Rec$ does not contain the treasure or the agent has seen the treasure.
\item $Perimeter(R)-Perimeter(Rec)\geq2$.
\item The agent is at the center of rectangle $Rec$.
\item The agent travelled a distance of at most $21(Perimeter(R)-Perimeter(Rec))$ during the execution of {\tt ReduceRectangle}($R$). 
\end{enumerate}
 
\end{lemma}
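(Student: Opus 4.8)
The plan is to prove Lemma~\ref{lem:rr} by a careful case analysis that follows the structure of Algorithm~\ref{alg:RR}, but first reducing the number of cases via the elementary transformations. Since $\phi_k$ is a composition of rotations by multiples of $\pi/2$ and at most one axial reflection, it is an isometry; it maps straight rectangles to straight rectangles, preserves perimeters and distances, and maps $(R,(L_1,x_1))$ to a basic configuration $(R',(L_1',x_1'))$. Consequently every property in the statement --- containment, the perimeter drop, being at the center, and the travelled distance --- is invariant under $\phi_k$, so it suffices to establish all four properties for the basic configuration, i.e. when $x_1'=right$ and the configuration is lying (the line $L_1'$ cuts the two vertical sides $[AD]$ and $[BC]$). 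This is what the notation in Algorithm~\ref{alg:RR} is set up for, with $a,d$ the intersections of $L_1'$ with the vertical sides and $e$ the projection of $d$ onto $[AD]$.

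First I would dispose of the non-critical case. If $(R',(L_1',x_1'))$ is not critical, then by definition $L_1'$ cuts some side of $R'$ into two pieces whose shorter piece has length $y\ge 1$; since the configuration is lying, this happens on the vertical sides, and $ABde$ is exactly the straight rectangle obtained by chopping off the part of $R'$ strictly to the left of (below) the excluded half-plane. Property~1 holds because the hint $(L_1',right)$ guarantees the treasure is in the retained half-plane; Property~3 holds because the agent's last move is to the center $o'$ of NewRectangle; Property~2 is the computation that $Perimeter(R')-Perimeter(ABde)=2y\ge 2$ (the two horizontal sides each shrink by $y$); and Property~4 follows since the agent only moves once, from the center $p$ of $R'$ to the center of $ABde$, a distance of $y/2\le \tfrac12(Perimeter(R')-Perimeter(Rec))$, well within the bound $21(\cdot)$. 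The high-level-idea section already records exactly these three structural facts for a good hint, so this part is essentially bookkeeping.

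The bulk of the work is the critical case, and this is the main obstacle. Here the agent moves from $p$ to $p'$ (distance $2$, perpendicular to $L_1'$, into the excluded half-plane), obtains a second hint $(L_2',x_2')$, and then the algorithm branches into the seven \textbf{If} clauses according to where $L_2'$ lies in the clockwise sweep between $L_1''$, $(pp')$, $(m'k')$, $(g'h')$, $(p'k)$ and back to $L_1''$, and according to $x_2'$. For each branch I would: (i) verify geometrically that the union of the two excluded half-planes, together with the rectangles fed to {\tt RectangleScan}, covers all of $R'\setminus \text{NewRectangle}$ --- this uses Proposition~\ref{prelim} to know {\tt RectangleScan} genuinely clears those small rectangles, and one must check the quadrilateral ``sliver'' left uncleared by the half-planes is contained in the scanned rectangles (as in the $ss'd'd$, $gg'h'h$ example of the high-level description) --- giving Property~1; (ii) check NewRectangle is a straight rectangle with $Perimeter(R')-Perimeter(\text{NewRectangle})\ge 2$, which is where the hypothesis that no side of $R'$ is shorter than $4$ is used, since NewRectangle is obtained by cutting along a vertical or horizontal line at distance roughly $1$ or $2$ from a side, and one needs the complementary strip to have width $\ge 1$; (iii) confirm the final move lands the agent at the center of NewRectangle (Property~3, immediate from lines~\ref{l:movefin1}--\ref{l:finplat}); (iv) bound the total distance travelled --- the move $p\to p'$ costs $2$, each {\tt RectangleScan} costs $O(1)$ since the scanned rectangles have both sides bounded by an absolute constant (they are slivers of width $\le 3$ and bounded length near a corner or near $p$), and the final move to the center of NewRectangle costs at most half the diameter of $R'$, but crucially one shows it is $O(1)$ as well because NewRectangle shares a center-region with $R'$ up to a bounded shift; since the perimeter drop is $\ge 2$, a uniform constant (which the authors have tuned to $21$) dominates all these contributions, giving Property~4.

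The delicate points I would flag for extra care: the clockwise-between conditions in the \textbf{If} clauses must be checked to be exhaustive and mutually exclusive over all possible $(L_2',x_2')$ (every hint is a half-plane through $p'$, so $L_2'$ is an arbitrary line through $p'$ and $x_2'$ an arbitrary side), so that exactly one NewRectangle is assigned; and in each branch the claim ``$R'\setminus\text{NewRectangle}$ contains no treasure'' requires simultaneously that the retained side of $L_2'$ be respected, which constrains which rectangle we may keep --- e.g. when $x_2'=left$ we may keep a rectangle extending leftward only up to $L_2'$, which is why the scans of $ss'd'd$ and $gg'h'h$ are needed to certify the thin region between $L_2'$ and the boundary of NewRectangle. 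I expect the geometry of verifying coverage in the two or three ``hardest'' branches (those requiring two {\tt RectangleScan} calls) to be the genuinely non-trivial part; everything else is isometry-invariance plus constant-factor accounting.
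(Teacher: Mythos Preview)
Your overall reduction via the isometry $\phi_k$ and your treatment of the non-critical case match the paper's proof. The genuine gap is in your accounting for Property~4 in the critical case.

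You claim that ``each {\tt RectangleScan} costs $O(1)$ since the scanned rectangles have both sides bounded by an absolute constant'' and that ``the final move to the center of NewRectangle \dots\ is $O(1)$ as well''. This is false. Look at the definitions in Algorithm~\ref{alg:RR}: $m,m'$ are projections of $p,p'$ onto the left side $[AD]$ and $k,k'$ onto the right side $[BC]$, so the rectangle $m'k'km$ has one side of length $|AB|$ (the full width of $R'$) and the other of length at most $2$. Likewise $gg'h'h$ has a side of length $|AD|$, and the paper shows $|sd|<1+|AC|$ for $ss'd'd$. By Proposition~\ref{prelim} each such scan therefore costs $\Theta(|AB|)$ or $\Theta(|AD|)$, which is \emph{not} bounded by an absolute constant. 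Your proposed argument --- ``all costs are $O(1)$, perimeter drop is $\ge 2$, so the ratio is bounded'' --- would give an unbounded ratio here and simply does not work.

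What actually makes Property~4 go through is a matching you have not identified: in precisely the branches where a long-thin rectangle is scanned, NewRectangle is a \emph{half} or a \emph{quarter} of $R'$ (e.g.\ $gBCh$, $ABkm$, $pkCh$, $Agpm$), so the perimeter drop is itself $\Theta(|AB|)$ or $\Theta(|AB|+|AD|)$. The paper bounds, for instance, the second branch by $\tfrac{9\Delta}{2}+10|AB|$ with $\Delta=|AB|/2$ and perimeter drop $|AB|$, and the third branch by roughly $10(|AB|+|AC|+1)$ against a perimeter drop of $|AB|+|AD|$; the constant $21$ emerges from the worst of these ratios. Your description of NewRectangle as ``obtained by cutting along a vertical or horizontal line at distance roughly $1$ or $2$ from a side'' is accordingly wrong for four of the six branches. (Two minor slips: $p'$ lies in the \emph{hint} half-plane $(L_1',x_1')$, not the excluded one; and there are six {\bf If} clauses, not seven.) You need to redo the cost analysis branch by branch, pairing each scan cost with the actual perimeter drop in that branch.
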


\begin{proof}
Most of the geometric objects used in the proof are explicitely defined in Algorithm~\ref{alg:RR}: in particular, this is the case of intersections or orthogonal projections (e.g., those in lines~\ref{l:consis1} to~\ref{l:consis2}). All other necessary objects will be defined within the proof. For the notation, refer to Fig. \ref{fig:f1}.

Consider the execution of function {\tt ReduceRectangle}($R$) starting at the center $p$ of $R$, where $R$ is a straight rectangle with no sides of length less than $4$.  
Denote by $z$ the position of the treasure in $(L_1,x_1)$. We have $(R',(L_1',x_1'))=\phi_k((R,(L_1,x_1)))$. In view of Proposition~\ref{pro:plat2}, it is enough to prove that the following three properties hold when the agent executes the last line of Algorithm~\ref{alg:RR}.

\begin{itemize}
\item {\bf P1}. The variable $NewRectangle$ is set to a straight rectangle $R''$ such that  $R''\subset R'$, and either $R'\setminus R''$ does not contain $\phi_k(z)$ or the agent has seen the treasure.
\item {\bf P2}. The inequality $Perimeter(R')-Perimeter(R'')\geq2$ holds.
\item {\bf P3}. The agent travelled a distance of at most $21(Perimeter(R)-Perimeter(R''))$ during the execution of function {\tt ReduceRectangle}($R$).
\end{itemize}

We first prove the above properties when $(R',(L_1',x_1'))$ is a non-critical configuration. In this case, the variable $NewRectangle$ is set to the straight rectangle $ABde$. Note that the points defined in lines~\ref{l:basicdef1} to~\ref{l:basicdef2} (in particular  the points $A$,$B$, $d$ and $e$) exist and $ABde$ is a straight rectangle such that $ABde\subset R'$ in view of the fact that $(R',(L_1',x_1'))$ is a basic configuration. Moreover, since $z\in(L_1,x_1)$, we have $\phi_k(z)\in(L'_1,x'_1)$. However, $edCD\cap(L'_1,x'_1)\subset [de]$ and $R'\setminus ABde=edCD\setminus [de]$. So we have $(R'\setminus ABde)\cap(L'_1,x'_1)=\emptyset$ and Property {P1} is satisfied. Property {P2} also holds because $(R',(L_1',x_1'))$ is a basic configuration that is not critical. Indeed, in that case we know that the length $|Bd|\leq|BC|-1$, as $|dC|\geq1$. Hence, $|Ae|+|Bd|\leq|AD|+|BC|-2$, and thus $Perimeter(R')-Perimeter(ABde)\geq2$. It remains to prove Property P3. If we denote by $\Delta$ the difference  $|BC|-|Bd|$,  the distance from $\phi_k(p)=p$ to the center $o'$ of rectangle ABde is exactly $\frac{\Delta}{2}$. Moreover, the distance from $p$ to $\phi_k^{-1}(o')$ is also $\frac{\Delta}{2}$, as $\phi_k^{-1}$ is a distance-preserving transformation. As a result, since the only movement of the agent is from $p$ to $\phi_k^{-1}(o')$ and $\Delta=\frac{Perimeter(R')-Perimeter(ABde)}{2}$, when the agent executes the last line of Algorithm~\ref{alg:RR}, it has traveled a distance of $\frac{Perimeter(R')-Perimeter(ABde)}{4}$ during the execution of function {\tt ReduceRectangle}($R$).
Thus the lemma holds if $(R',(L_1',x_1'))$ is a non-critical configuration. 

Let us now consider the more difficult situation when $(R',(L_1',x_1'))$ is a critical configuration. In Algorithm~\ref{alg:RR}, this situation is handled by moving the agent to the point $\phi_k^{-1}(p')$ (cf. line~\ref{l:move2}) where $p'$ is the point defined at line~\ref{l:p'}, in order to get a second hint $(L_2,x_2)$ at $\phi_k^{-1}(p')$. We have six cases to consider depending on the nature of $(L_2,x_2)$. Similarly as for non-critical configurations, we do not study the six cases directly on $(L_2,x_2)$, but on $(L'_2,x'_2)$ instead, where $(L'_2,x'_2)$ is such that $(L_2',x_2')=\phi_k((L_2,x_2))$. Note that if the list of cases for $(L'_2,x'_2)$ covers all possible situations, and in each of those cases Properties {P1} to {P3} are  satisfied, then the lemma will be proven.

The six cases correspond to the six conditional statements that are in lines~\ref{l:debuts} to~\ref{l:fins} of Algorithm~\ref{alg:RR}. The fact that these cases
cover all possible situations follows from the fact that $(R',(L_1',x_1'))$ is a basic configuration, by Proposition~\ref{pro:plat2}, and from the fact that the objects defined in lines~\ref{l:consis1} to~\ref{l:consis2} of Algorithm~\ref{alg:RR} exist. In turn, the existence of these objects follows from the definition of $R'$ and of $(L_1',x_1')$ as well as from the following three claims (note that $R'$ has no side with length less than $4$, as $\phi_k$ is a distance-preserving transformation and $R$ has no side with length less than $4$).

\begin{claim}
\label{cl:plat3}
$(R',(L_1',x_1'))$ is an imperfect lying configuration of type $1$.
\end{claim}

\begin{proofclaim}
Since $(R',(L_1',x_1'))$ is basic, we just have to show that it is not a perfect lying configuration of type $1$. Suppose by contradiction that it is. So, $x_1'=up$ and line $L_1'$ divides the west vertical side $[AD]$ (resp. the east vertical side $[BC]$) of rectangle $R'$ into two parts of equal length. Since, $(R',(L_1',x_1'))$ is critical, each of these parts has length less than $1$. As a result, $|AD|$ (resp. $|BC|$) is smaller than $2$. This implies that $R'$ has a side with a length smaller than $4$, which is a contradiction and concludes the proof of the claim.
\end{proofclaim}

\begin{claim}
\label{cl:plat4}
The point $p'$ belongs to $\mathcal{I}(R')$.
\end{claim}

\begin{proofclaim}
Since $p$ is the center of rectangle $R'$ that has no side of length less than $4$, every point that is at distance at most $2$ from $p$, and which is not one of the four orthogonal projections of $p$ on the sides of $R'$, necessarily belongs to $\mathcal{I}(R')$. However, by Algorithm~\ref{alg:RR}, point $p'$ is at distance 2 from $p$
on a line perpendicular to $L_1'$ and that passes through $p$. Moreover, by Claim~\ref{cl:plat3}, $(R',(L_1',x_1'))$ is an imperfect vertical configuration of type $1$, and thus the slope of $L_1'$ is negative. Hence the claim holds.
\end{proofclaim}

\begin{claim}
\label{cl:plat5}
The line $L_1''$ divides the northern side $[AB]$ (resp. the east side $[BC]$) of $R'$ into two parts of positive length.
\end{claim}

\begin{proofclaim}
In view of Claim~\ref{cl:plat3} and the fact that $L_1''$ is a line parallel to $L_1'$ passing through $p'$ that is a point belonging to $\mathcal{I}(R')$ (cf. Claim~\ref{cl:plat4}), it follows that $L_1''$ divides the east side $[BC]$ of $R'$ into two parts of positive length. It also follows that  $L_1''$ intersects the northern side $[AB]$ or the west side $[AD]$ of $R'$.
So to prove the claim, it is enough to show that $L_1''$ cannot intersect $[AD]$ (i.e., cannot pass through any points of $[AD]$ including the corners $A$ and $D$). 
Assume by contradiction that it does. Since $L_1''\subset(L_1',x_1')$ and the distance from any point of $L_1'$ to any point of $L_1''$ is at least $|pp'|$, then according to the definition of $L_1'$ and $L_1''$, we know that the segment $[AD]\cap(L_1',x_1')$ has a length that is at least $|pp'|=2$. However, by Claim~\ref{cl:plat3} and the fact that $(R',(L_1',x_1'))$ is a critical configuration, the segment $[AD]\cap(L_1',x_1')$ has a length that is smaller than $1$, which is a contradiction and proves the claim.
\end{proofclaim}

Hence, since we have a list of six cases covering all possible situations, it is enough to show that Properties {P1} to {P3} are  satisfied in each case, in order to conclude the proof of the lemma. Before analyzing them, let us give another claim that will be useful in the sequel.

\begin{claim}
\label{cl:plat6}
The length of segment $[Af]$ (resp. $[jC]$) is at least $1$.
\end{claim}

\begin{proofclaim}
As mentioned previously, the distance from any point of $L_1'$ to any point of $L_1''$ is at least $|pp'|=2$. Hence,  $|af|\geq2$ and $|jd|\geq2$. Since $d\in[jC]$ and $|aA|<1$ (because the configuration is critical) and $[af]$ is the hypothenuse of the right rectangle $Afa$, the claim follows.
\end{proofclaim}

The fact that each object that is assigned to variable $NewRectangle$ or given as input parameter to procedure {\tt RectangleScan} is a rectangle, can be shown using the above claims. Moreover, from the definitions of intersections and projections given in Algorithm~\ref{alg:RR}, it follows that each time procedure {\tt RectangleScan} is called with an input parameter corresponding to a rectangle $X$, the agent is in the rectangle $X$ (this is necessary in order to obtain a correct execution of the procedure). In the rest of the proof we will not mention this fact. Similarly, it follows  
 that a rectangle that is assigned to variable NewRectangle is always straight. 
 
Now, we consider the six cases and we start with the first one in which {$x'_2=right$ and $L'_2$ is clockwise between $L_1''$ (included) and $(pp')$ (excluded)}. In this case, variable $NewRectangle$ is set to the straight rectangle $fBCt\subset R'$. Since $z\in(L_1,x_1)\cap(L_2,x_2)$, we have $\phi_k(z)\in(L'_1,x'_1)\cap(L'_2,x'_2)$. However, $R'\setminus fBCt=AftD\setminus[ft]$, and in view of the value of $x'_2$ and the position of $L'_2$, we have $AftD\cap(L'_1,x'_1)\cap(L'_2,x'_2)\subseteq\{f\}$ (more precisely, $AftD\cap(L'_1,x'_1)\cap(L'_2,x'_2)=\{f\}$ if $L'_2=L_1''$, and $AftD\cap(L'_1,x'_1)\cap(L'_2,x'_2)=\emptyset$ for all the other positions of $L_2'$ within the considered case). Hence, $(R'\setminus fBCt)\cap(L'_1,x'_1)\cap(L'_2,x'_2)=\emptyset$ and  Property {P1} is satisfied. Concerning Property {P2}, we know that $|fB|=|AB|-|Af|$, which implies $|fB|\leq|AB|-1$ because $|Af|\geq1$ according to Claim~\ref{cl:plat6}. So, $Perimeter(R')-Perimeter(fBCt)\geq2$, and thus Property {P2} holds. Concerning Property P3, we need to evaluate the distance travelled by the agent when it moves from $p$ to $\phi_k^{-1}(p')$, and then from $\phi_k^{-1}(p')$ to $\phi_k^{-1}(o')$ (where $o'$ is the center of rectangle $fBCt$). Note that the distance from $p$ to $\phi_k^{-1}(o')$ is $\frac{\Delta}{2}$ where $\Delta$ is the difference  $|AB|-|fB|$. Moreover, $|pp'|=|p\phi_k^{-1}(p')|=2$. Hence $|p\phi_k^{-1}(p')|\leq2\Delta$ because $|AB|-|fB|=|Af|$ and $|Af|\geq1$ according to Claim~\ref{cl:plat6}. Thus, moving from $p$ to $\phi_k^{-1}(p')$ makes the agent travel a distance of at most $2\Delta$. Moving from $\phi_k^{-1}(p')$ to $\phi_k^{-1}(o')$ makes the agent travel a distance that is upper bounded by $|\phi_k^{-1}(p')p|+|p\phi_k^{-1}(o')|\leq\frac{5\Delta}{2}$. As a result, the total distance traveled by the agent is at most $\frac{9\Delta}{2}=\frac{9(Perimeter(R')-Perimeter(fBCt))}{4}$, as $\Delta=\frac{Perimeter(R')-Perimeter(fBCt)}{2}$. Hence Properties P1, P2 and P3 hold in this case.

Let us now consider the situation when {$x'_2=right$ and $L'_2$ is clockwise between $(pp')$ (included) and $(m'k')$ (excluded)}. The variable $NewRectangle$ is then set to the straight rectangle $gBCh\subset R'$. Note that $R'\setminus gBCh\subset AghD$. In view of the value of $x'_2$ and the position of $L'_2$, $AghD\cap(L'_1,x'_1)\cap(L'_2,x'_2)$ is included in the rectangle $m'k'km$. Since $\phi_k(z)\in(L'_1,x'_1)\cap(L'_2,x'_2)$, if the agent has not seen the treasure after having executed {\tt RectangleScan}($\phi^{-1}_k(m'k'km)$), then in view of Proposition~\ref{prelim} and the definition of transformation $\phi_k$ we know that $\phi_k(z)$ cannot be in the rectangle $AghD$. Thus, Property {P1} is satisfied. Property P2 follows from the facts that $|gB|=\frac{|AB|}{2}$ (since $g$ is the orthogonal projection of the center $p$ of $R'$ on the top side $[AB]$ of $R'$) and that $|AB|\geq4$ (as $R'$ has no side of length less than $4$). So, it remains to check the validity of Property P3 in the current case. The move of the agent is composed of three parts: the first part is when it moves from $p$ to $\phi_k^{-1}(p')$, the second part corresponds to the move made when executing procedure {\tt RectangleScan}($\phi^{-1}_k(m'k'km)$), and the third part is when the agent moves to $\phi_k^{-1}(o')$ (where $o'$ is the center of the rectangle $gBCh$).
Note that the execution of {\tt RectangleScan}($\phi^{-1}_k(m'k'km)$) starts and finishes at point $\phi_k^{-1}(p')$. This implies that the third part corresponds precisely to a move from point $\phi_k^{-1}(p')$ to point $\phi_k^{-1}(o')$. So, by similar arguments to those used in the previous case, we can show that the distance traveled in the first part plus the distance traveled in the third part gives a total of at most $\frac{9\Delta}{2}$ where, in the current situation, $\Delta$ is the difference  $|AB|-|gB|$. For the second part, corresponding to the execution of procedure {\tt RectangleScan}($\phi^{-1}_k(m'k'km)$), note that since $|p\phi^{-1}_k(p')|=2$, we have $|kk'|\leq2$ because $k$ (resp. $k'$) is the orthogonal projection of $p$ (resp. $p'$) onto $[BC]$. Moreover, $|m'k'|=|AB|$, and in view of the definition of $R'$, we have $|AB|\geq4$. Hence, according to Proposition~\ref{prelim}, we know that the agent travels a distance of at most $10|AB|$ during the second part. So the total distance traveled by the agent is at most $\frac{9\Delta}{2}+10|AB|$. As explained for Property P2, we know that $|gB|=\frac{|AB|}{2}$. Hence, $\Delta=\frac{|AB|}{2}$, $Perimeter(R')-Perimeter(gBCh)=|AB|$, and thus the total distance traveled by the agent is at most $\frac{49(Perimeter(R')-Perimeter(gBCh))}{4}$. As a result, Properties P1, P2 and P3 hold in this case.

We continue by analyzing the situation when {$x'_2$ $\in\{down,left\}$ and $L'_2$ is clockwise between $(m'k')$ included and $L_1''$ (excluded)}. In this situation, variable $NewRectangle$ is set to the straight rectangle $pkCh\subset R'$. In view of the value of $x'_2$ and the position of $L'_2$, we have $(L'_1,x'_1)\cap(L'_2,x'_2)\cap(R'\setminus pkCh)\subset (ss'd'd \cup m'k'km)$. Since $\phi_k(z)\in(L'_1,x'_1)\cap(L'_2,x'_2)$, if the agent has not seen the treasure after having executed {\tt RectangleScan}($\phi^{-1}_k(ss'd'd)$) followed by {\tt RectangleScan}($\phi^{-1}_k(m'k'km)$), then in view of Proposition~\ref{prelim} and the definition of transformation $\phi_k$ we know that $\phi_k(z)$ cannot be in $R'\setminus pkCh$. Thus, Property {P1} is satisfied. We can show that Property P2 also holds by similar arguments to those used to show Property P2 in the previous case. Concerning Property P3,  note that the move of the agent can be divided into four parts: the first part is when it moves from $p$ to $\phi_k^{-1}(p')$, the second (resp. third) part corresponds to the move made when executing procedure {\tt RectangleScan}($\phi^{-1}_k(ss'd'd)$) (resp. {\tt RectangleScan}($\phi^{-1}_k(m'k'km)$)), and the fourth part is when the agent moves to $\phi_k^{-1}(o')$ (where $o'$ is here the center of rectangle $pkCh$). Note that the execution of {\tt RectangleScan}($\phi^{-1}_k(ss'd'd)$) (resp. {\tt RectangleScan}($\phi^{-1}_k(m'k'km)$)) starts and finishes at point $\phi_k^{-1}(p')$. So, the fourth part is actually a move from point $\phi_k^{-1}(p')$ to point $\phi_k^{-1}(o')$. It is worth mentioning that moving from $\phi_k^{-1}(p')$ to point $\phi_k^{-1}(o')$ costs the same or less than first moving from $\phi_k^{-1}(p')$ to  $p$, and then moving from $p$ to  $\phi_k^{-1}(o')$. Moreover, moving from $p$ to  $\phi_k^{-1}(o')$ costs at most $\frac{\Delta_1+\Delta_2}{2}$ where $\Delta_1$ (resp. $\Delta_2$) is the difference $|gh|-|pg|$ (resp. $|mk|-|pm|$). Hence during the fourth part, the agent travels a distance of at most $2+ \frac{\Delta_1+\Delta_2}{2}$. During the first part, the agent travels a distance $2$. What about the second and third parts? To evaluate these costs we need to evaluate the lengths and widths of rectangles $ss'd'd$ and $m'k'km$. In the analysis of the previous case, we have shown that the length and width of rectangle $m'k'km$ are respectively $|AB|$ and at most $2$. Concerning rectangle $ss'd'd$, we have the following claim.

\begin{claim}
\label{cl:plat7}
$|ss'|=2$ and $2<|sd|<1+|AC|$.
\end{claim}

\begin{proofclaim}
Note that $|ss'|$ is exactly $2$ because $s$ (resp. $s'$) is the orthogonal projection of the corner $A$ onto line $L_1'$ (resp. $L_1''$). Also note that $|sd|=|sa|+|ad|$ where $[sa]$ is a side of the right triangle $asA$ whose hypotenuse is $[Aa]$. However, by Claim~\ref{cl:plat3} and the fact that $(R',(L_1',x_1'))$ is critical, we know that $|Aa|<1$. Moreover, $[ad]\subset R'$ and $|ad|\geq|AB|\geq4$. Hence, $2<|sd|<1+|AC|$, which concludes the proof of the claim.
\end{proofclaim}

As a result, according to Proposition~\ref{prelim}, we know that the agent travels a distance of at most $10(1+|AC|)$ during the second part and a distance of at most $10|AB|$ during the third part. Hence, the total distance traveled by the agent is at most $2+\frac{\Delta_1+\Delta_2}{2}+10(|AB|+|AC|+1)$. Note that $|gh|-|pg|=\frac{|AD|}{2}$, $|mk|-|pm|=\frac{|AB|}{2}$ and $|AC|<|AB|+|AD|$. Furthermore, in view of the fact that $R'$ has no side of length less than $4$, we have $\frac{|AD|}{2}\geq2$ and $\frac{|AB|}{2}\geq2$. So, the total distance traveled by the agent is at most $\frac{|AD|}{2}+\frac{|AD|+|AB|}{4}+10(2|AB|+|AD|+1)$. This in turn gives us a traveled distance that is upper bounded by $21(Perimeter(R')-Perimeter(pkCh))$ as $Perimeter(R')-Perimeter(pkCh)=2(\frac{|AD|}{2}+\frac{|AB|}{2})=|AD|+|AB|$. 
Consequently, Properties P1, P2 and P3 are valid in this case.

So far, we have analyzed the first three cases among the six cases that permit us to cover entirely the situation when $(R',(L_1',x_1'))$ is a critical configuration. However,  the arguments we need to use in order to analyze the last three cases are similar to those already used to analyze the first three cases. In particular, this is true for the fourth case when {$x'_2=left$ and $L'_2$ is clockwise between $L_1''$ (included) and $(g'h')$ (excluded)}: using a similar reasoning to that for the third case we have analyzed just above, we can show that Properties P1, P2 and P3 are also valid here. For the fifth case, which corresponds to the boolean expression of line~\ref{l:toobig}, Properties P2 and P3 can be proven using a similar reasoning to that used above to prove Properties P2 and P3 when {$x'_2=right$ and $L'_2$ is clockwise between $(pp')$ (included) and $(m'k')$ (excluded)}. Concerning Property P1, note that variable $NewRectangle$ is set here to the straight rectangle $ABkm\subset R'$. In view of the possible values of $x'_2$ and the possible positions of $L'_2$, we have $(L'_1,x'_1)\cap(L'_2,x'_2)\cap(R'\setminus ABkm)\subset gg'h'h$ if $x'_2=left$ and $L'_2$ is clockwise between $(g'h')$ (included) and $(pp')$ (excluded). Otherwise, we have $(L'_1,x'_1)\cap(L'_2,x'_2)\cap(R'\setminus ABkm)=\emptyset$. Since $\phi_k(z)\in(L'_1,x'_1)\cap(L'_2,x'_2)$, if the agent has not seen the treasure after having executed {\tt RectangleScan}($\phi^{-1}_k(gg'h'h)$),  then in view of Proposition~\ref{prelim} and the definition of transformation $\phi_k$ we know that $\phi_k(z)$ cannot be in $R'\setminus ABkm$. Thus  Property {P1} is also true in this case. Finally, the fact that Properties P1, P2 and P3 are true in the last of the six cases i.e., when {$x'_2=right$ and $L'_2$ is clockwise between $(p'k)$ (included) and $L_1''$ (excluded)} can be proven using similar arguments to those used for the first case, i.e.,  when {$x'_2=right$ and $L'_2$ is clockwise between $L_1''$ (included) and $(pp')$ (excluded)}. This completes the proof of the lemma.
\end{proof}

\begin{theorem}
Consider an agent $A$ and a treasure located at distance at most $D$ from the initial position of $A$. By executing Algorithm {\tt TreasureHunt1}, agent $A$ finds the treasure after having traveled a distance $\mathcal{O}(D)$. 
\end{theorem}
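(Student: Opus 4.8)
The plan is to obtain the theorem from Lemma~\ref{lem:rr} by a routine phase-by-phase accounting, since Lemma~\ref{lem:rr} already absorbs all the geometric difficulty. Fix a phase $j$. Let $R_j^{(0)}$ be the straight square of side $2^j$ created in line~\ref{rr:1} of Algorithm~\ref{alg:plat}, and let $R_j^{(0)}\supseteq R_j^{(1)}\supseteq\cdots\supseteq R_j^{(t)}$ be the successive rectangles produced by the calls to {\tt ReduceRectangle} in the \textbf{while} loop (lines~\ref{rr:condition}--\ref{rr:2}). First I would check that the hypothesis of Lemma~\ref{lem:rr} holds at every call: the loop guard forces each side of the current rectangle to have length at least $4$, part~1 of the lemma gives $R_j^{(\ell)}\subseteq R_j^{(0)}$ so each side also has length at most $2^j$, and part~3 places the agent at the center of $R_j^{(\ell)}$ exactly when the $(\ell+1)$-st call starts. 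I would also record that the loop terminates: by part~2 each call decreases the perimeter by at least $2$, while the loop can run only as long as both sides are $\ge 4$, i.e.\ the perimeter is $\ge 16$; hence $t<2^{j+1}$, and so each phase, and the whole algorithm, actually reaches its {\tt RectangleScan} step.

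Next I would bound the distance travelled in phase $j$. Summing part~4 of Lemma~\ref{lem:rr} over the $t$ calls telescopes, so the distance travelled inside the \textbf{while} loop is at most $21\bigl(Perimeter(R_j^{(0)})-Perimeter(R_j^{(t)})\bigr)\le 21\cdot 4\cdot 2^j$. When the loop exits, $R_j^{(t)}$ is a straight rectangle contained in $R_j^{(0)}$ having a side of length $<4$; writing its side lengths as $m\le n$ we have $m<4$ and $n\le 2^j$, hence $\max(m,2)<4$, and Proposition~\ref{prelim} bounds the cost of {\tt RectangleScan}$(R_j^{(t)})$ by $5n\max(m,2)<20\cdot 2^j$. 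Finally the agent returns (line~\ref{rr:4}) from the center of $R_j^{(t)}$ to $O$, a distance at most $2^{j-1}\sqrt2<2^j$ because $R_j^{(t)}\subseteq R_j^{(0)}$. Altogether the cost of phase $j$ is at most $c\,2^j$ for an absolute constant $c$; the degenerate phase $j=1$ (where the loop body is skipped, since the side $2<4$) is covered by a trivial constant bound.

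Then I would argue correctness, namely that treasure hunt stops no later than the end of phase $j_0=\lceil\log_2 D\rceil+1$. By part~1 of Lemma~\ref{lem:rr}, if the treasure lies in $R_j^{(\ell)}$ then it lies in $R_j^{(\ell+1)}$ unless the agent has already seen it; therefore, entering phase $j$ with the treasure in $R_j^{(0)}$, either the agent sees the treasure during some call to {\tt ReduceRectangle}, or it reaches line~\ref{rr:3} at the center of $R_j^{(t)}$ which still contains the treasure, and then {\tt RectangleScan}$(R_j^{(t)})$ brings it to distance at most $1$ from the treasure by Proposition~\ref{prelim}. Since $2^{j_0}\ge 2D$, the square $R_{j_0}^{(0)}$ of side $2^{j_0}$ centered at $O$ contains the whole disk of radius $D$ about $O$, hence the treasure; so the algorithm halts during phase $j_0$ at the latest. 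Summing the per-phase bound and using $2^{j_0}<4D$ gives total distance at most $\sum_{j=1}^{j_0}c\,2^j<c\,2^{j_0+1}<8cD$, i.e.\ $\mathcal{O}(D)$, as claimed.

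As for the main obstacle: essentially all the work is already done in Lemma~\ref{lem:rr}, so what remains demands care rather than ingenuity, and only in three places — verifying that the precondition ``no side shorter than $4$'' of {\tt ReduceRectangle} is preserved throughout the \textbf{while} loop (so that the lemma applies to each call), getting the telescoping of the cost bound right, and noticing that on loop exit the surviving rectangle has one side below $4$ while the other stays at most $2^j$, which is exactly what makes the concluding {\tt RectangleScan} cheap.
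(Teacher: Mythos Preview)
Your proposal is correct and follows essentially the same approach as the paper: both arguments verify the hypotheses of Lemma~\ref{lem:rr} along the \textbf{while} loop, telescope part~4 of the lemma to bound the loop cost by a constant times the initial perimeter $4\cdot 2^j$, use Proposition~\ref{prelim} and the containment $R_j^{(t)}\subseteq R_j^{(0)}$ to bound the {\tt RectangleScan} and return-to-$O$ costs, and then sum the geometric series over phases up to $j_0=\lceil\log_2 D\rceil+1$. Your write-up is in fact slightly tighter in places (e.g., the bound $t<2^{j+1}$ versus the paper's $2^{j+2}$, and the return distance $2^{j-1}\sqrt2$ versus $2^{j+1}$), but the structure is identical.
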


\begin{proof}
The execution of Algorithm~\ref{alg:plat} can be divided into phases $1,2,3,\ldots$ where phase $j\geq1$ is the part of the execution in which variable $i$ of Algorithm~\ref{alg:plat} is equal to $j$. 

In view of the second and third properties of Lemma~\ref{lem:rr} and lines~\ref{rr:1} to~\ref{rr:2} of Algorithm~\ref{alg:plat}, the number of calls to function {\tt ReduceRectangle} is bounded by the perimeter of a square with side length $2^j$. Hence we have the following claim.

\begin{claim}
\label{cl:plat8}
For every $j\geq1$, the number of calls to function {\tt ReduceRectangle}, within phase $j$, is bounded by $2^{j+2}$.
\end{claim}

In order to conclude the proof of the theorem, it is enough to prove the following two statements:
\begin{enumerate}
\item
for all $j\geq1$, the following property $\mathcal{H}_j$ holds:\\
 at the beginning of phase $j$ the agent has traveled a distance of at most $2^{j+7}$. 
\item 
the agent finds the treasure before starting phase $\lceil \log_2 D \rceil +2$. 
\end{enumerate}
We start by proving the first statement by induction on $j$. Note that property $\mathcal{H}_1$ is true because at the beginning of phase $1$ the agent has traveled a distance $0$. So, assume that, for a positive integer $\lambda$, property $\mathcal{H}_{\lambda}$ is true. We prove that property $\mathcal{H}_{\lambda+1}$ is also true.
Within phase $\lambda$, the move of the agent can be divided into two parts: the first part corresponds to the moves made when executing lines~\ref{rr:1} to~\ref{rr:2} of Algorithm~\ref{alg:plat}, while the second part corresponds to the moves made when executing lines~\ref{rr:3} and~\ref{rr:4} of Algorithm~\ref{alg:plat}.
By Claim~\ref{cl:plat8}, we know that the number $\tau$ of calls to function {\tt ReduceRectangle} during phase $\lambda$ is upper bounded by $2^{\lambda+2}$. For all $1\leq s\leq \tau$, we denote by $Q_s$ (resp. $Q'_s$) the rectangle that is the input parameter (resp. the returned value) of the $s$th call to function {\tt ReduceRectangle} during phase $\lambda$. Note that, for all $2\leq s\leq \tau$, $Q_s=Q'_{s-1}$. So, by the fourth property of Lemma~\ref{lem:rr}, the distance traveled by the agent during the first part of phase $\lambda$ is upperbounded by

\begin{align}
\label{platform:1}
~&21\sum_{s=1}^{s=\tau}(Perimeter(Q_s)-Perimeter(Q'_s))\\
~&\leq 21*(Perimeter(Q_1)-Perimeter(Q'_\tau)+\sum_{s=2}^{s=\tau}(Perimeter(Q_s)-Perimeter(Q'_{s-1})))\\
~&\leq 21*(Perimeter(Q_1)-Perimeter(Q'_\tau))~~~~~~~~~\mbox{(because  for all $2\leq s\leq \tau$, $Q_s=Q'_{s-1}$)}\\
~&\leq 21*Perimeter(Q_1)=21*2^{\lambda+2} \label{platform:4}
\end{align}

Concerning the second part of phase $\lambda$, it is worth mentioning that when the agent starts executing line~\ref{rr:3} of Algorithm~\ref{alg:plat}, variable $R_i$ is set  to a straight rectangle whose at least one side has length smaller than $4$ (according to line~\ref{rr:condition}),  and no sides have length larger than $2^{\lambda}$: indeed, using the first property of Lemma~\ref{lem:rr}, it follows by induction on $s$ that the straight rectangle $Q'_s$ is included in the straight rectangle $Q_1$, for all $1\leq s \leq \tau$. Moreover, the distance between any two points of $Q_1$ (and thus the cost of line \ref{rr:4} of Algorithm~\ref{alg:plat}) is at most $2^{\lambda +1}$.
Hence, in view of Proposition~\ref{prelim}, we know that the distance traveled by the agent during the second part of phase $\lambda$ is upper bounded by $22\cdot 2^{\lambda}$. From this and (\ref{platform:4}), we know that the total distance traveled during phase $\lambda$ is at most $2^{\lambda+7}$. Moreover, by the inductive hypothesis,  $\mathcal{H}_{\lambda}$ is true i.e., at the beginning of phase $\lambda$ the agent has traveled a distance of at most $2^{\lambda+7}$. As a result, when starting phase $\lambda+1$, the agent has traveled a total distance of at most $2^{\lambda+8}$. Thus, property $\mathcal{H}_{\lambda+1}$ is true, which concludes the inductive proof and thus proves the validity of the first statement.

Now let us focus on the second statement: the agent finds the treasure before starting phase $\lceil \log_2 D \rceil +2$. Suppose by contradiction that this is not the case. By Claim~\ref{cl:plat8} and Lemma~\ref{lem:rr}, at some point the agent starts executing phase $\lceil \log_2 D \rceil +1$. In view of Algorithm~\ref{alg:plat}, when the agent finishes the execution of line~\ref{rr:1} in phase $\lceil \log_2 D \rceil +1$, the value of variable $R_i$ is a square $S$ containing the treasure: indeed this square is centered at the initial position $O$ of the agent and it contains all points at distance at most $D$ from $O$ because its side length is $2^{\lceil \log_2 D \rceil +1}\geq 2D$, since $D>1$. 

Denote by $Q_{final}$ the rectangle returned by the last call to function {\tt ReduceRectangle} in phase $\lceil \log_2 D \rceil +1$: since the side length of $S$ is at least $2^{\lceil \log_2 D \rceil +1}\geq2^2$, this rectangle exists because the agent executes at least once line~\ref{l:rr} of Algorithm~\ref{alg:plat}. By Claim~\ref{cl:plat8} and Lemma~\ref{lem:rr}, at some point the agent executes line~\ref{rr:3} of Algorithm~\ref{alg:plat} and when the agent starts executing this line we know that it is at the center
of $Q_{final}$. Moreover, from Lemma~\ref{lem:rr}, it follows by induction on the number of calls to function {\tt ReduceRectangle} within phase $\lceil \log_2 D \rceil +1$, that the treasure does not belong to $S\setminus Q_{final}$, as otherwise the agent would have found the treasure before starting phase $\lceil \log_2 D \rceil +2$ which would be a contradiction. However, the treasure belongs to square $S$. Hence, the treasure belongs to $Q_{final}$, and by applying procedure {\tt RectangleScan}($Q_{final}$) (cf. line~\ref{rr:3}) from the center of $Q_{final}$, the agent necessarily finds the treasure by the end of the execution of this procedure, and thus by the end of phase $\lceil \log_2 D \rceil +1$. This gives a contradiction that proves the second statement.

Hence the agent finds the treasure before starting the execution of phase $\lceil \log_2 D \rceil +2$. By the first statement, the total  distance travelled by the agent during the first $\lceil \log_2 D \rceil +1$ phases is at most
$2^{(\lceil \log_2 D \rceil +2)+7}\leq 2^{10}*D$. Hence, the theorem holds.
\end{proof}

\section{Angles bounded by $\beta<2\pi$}

In this section we consider the case when all hints are angles upper-bounded by some constant $\beta<2\pi$, unknown to the agent. The main result of this section is Algorithm {\tt TreasureHunt2} whose cost is at most $O(D^{2-\epsilon})$, for some $\epsilon>0$. For a hint $(P_1,P_2)$ we denote by $\overline{(P_1,P_2)}$ the complement of $(P_1,P_2)$.

\subsection{High level idea}
\label{sub:hgi}
In Algorithm {\tt TreasureHunt2}, similarly as in the previous algorithm, 
  the agent acts in phases $j=1,2,3,\ldots$, where in each phase $j$ the agent ``supposes'' that the treasure is in the straight square centered at its initial position and of side length $2^j$. The intended goal is to search each supposed square at relatively low cost, and to ensure the discovery of the treasure by the time the agent finishes the first phase for which the initial supposed square contains the treasure. However, the similarity with the previous solution ends there: indeed, the hints that may now be less precise do not allow us to use the same strategy within a given phase. Hence we adopt a different approach that we outline below and that uses the following notion of tiling.
Given a square $S$ with side of length $x>0$, $Tiling(i)$ of $S$, for any non-negative integer $i$, is the partition of square $S$ into $4^i$ squares with side of length $\frac{x}{2^i}$. Each of these squares, called {\em tiles}, is closed, i.e., contains its border, and hence neighboring tiles overlap in the common border.

 Let us consider a simpler situation in which the angle of every hint $(P_1,P_2)$ is always equal to the bound $\beta$: the general case, when the angles may vary while being at most $\beta$, adds a level of technical complexity that is unnecessary to understand the intuition. In the considered situation, the angle of each excluded zone $\overline{(P_1,P_2)}$ is always the same as well. The following property holds in this case: there exists an integer $i_{\beta}$ such that for every square $S$ and every hint $(P_1,P_2)$ given at the center of $S$, at least one tile of $Tiling(i_{\beta})$ of $S$ belongs to the excluded zone $\overline{(P_1,P_2)}$. 


 In phase $j$, the agent performs $k$ steps: we will indicate later how the value of $k$ should be chosen.
 At the beginning of the phase, the entire square $S$ is white. 
 In the first step, the agent gets a hint $(P_1,P_2)$ at the center of $S$. By the above property, we know that $\overline{(P_1,P_2)}$ contains at least one tile of $Tiling(i_{\beta})$ of $S$, and we have the guarantee that such a tile cannot contain the treasure.  All points of all tiles included in $\overline{(P_1,P_2)}$ are painted black in the first step. This operation does not require any move, as painting is performed in the memory of the agent. As a result, at the end of the first step, each tile of $Tiling(i_{\beta})$ of $S$ is either black or white, in the following precise sense: a black tile is a tile all of whose points are black, and a white tile is a tile all of whose {\em interior} points are white.

In the second step, the agent repeats the painting procedure at a finer level. More precisely, the agent moves to the center of each white tile $t$ of $Tiling(i_{\beta})$ of $S$. When it gets a hint at the center of a white tile $t$, there is at least one tile of $Tiling(i_{\beta})$ of $t$ that can be excluded.  As in the first step,  all points of these excluded tiles are painted black. Note that a tile of $Tiling(i_{\beta})$ of $t$ is actually a tile of $Tiling(2i_{\beta})$ of $S$. Moreover, each tile of $Tiling(i_{\beta})$ of $S$ is made of exactly $4^{i_\beta}$ tiles of  $Tiling(2i_{\beta})$ of $S$. Hence, as depicted in Figure~\ref{fig:tiling}, the property we obtain at the end of the second step is as follows: each tile of  $Tiling(2i_{\beta})$ of $S$ is either black or white.

\begin{figure}[!htbp]
\begin{center}
  \begin{minipage}[t]{0.4\linewidth}
    \centering
	\includegraphics[width=0.6\textwidth]{./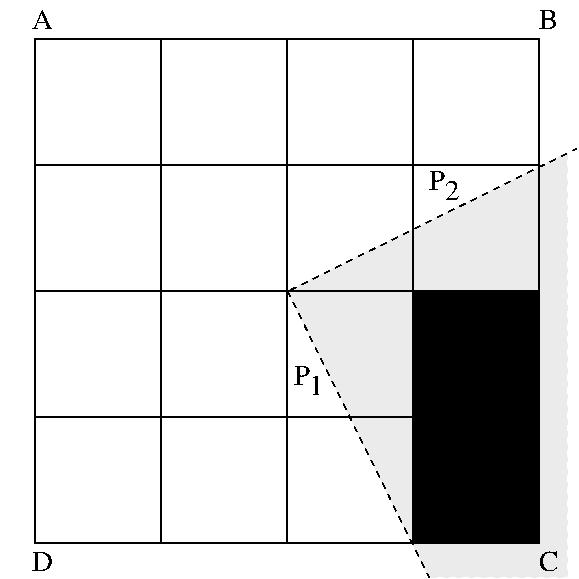}\\
    {\footnotesize ($a$) At the end of a first step\\for a hint $(P_1,P_2)$}
  \end{minipage}
  \begin{minipage}[t]{0.4\linewidth}
    \centering
	\includegraphics[width=0.6\textwidth]{./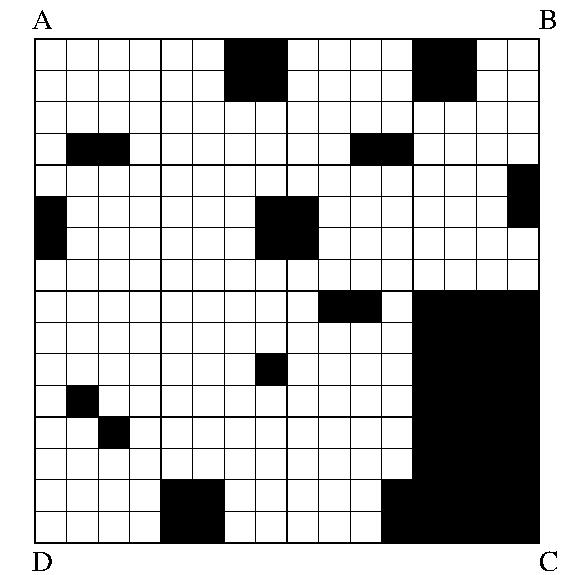}\\
    {\footnotesize ($b$) At the end of a second step}
  \end{minipage}
\end{center}
 \caption{White and black tiles at the end of the first and the second step of a phase, for square $S=ABCD$ and $i_\beta=2$.}
\label{fig:tiling}
\end{figure}


In the next steps, the agent applies a similar process at increasingly finer levels of tiling. More precisely, in step $2<s\leq k$, the agent moves to the center of each white tile of $Tiling((s-1)i_{\beta})$ of $S$ and gets a hint that allows it to paint black at least one tile of $Tiling(s\cdot i_{\beta})$ of $S$. At the end of step $s$, each tile of $Tiling(s\cdot i_{\beta})$ of $S$ is either black or white. We can show that at each step $s$ the agent paints black at least $\frac{1}{4^{i_{\beta}}}$th of the area of $S$ that is white at the beginning of step $s$.

After step $k$, each tile of $Tiling(k\cdot i_{\beta})$ of $S$ is either black or white.  These steps permit the agent to exclude some area without having to search it directly, while keeping some regularity of the shape of the black area. The agent paints black a smaller area than excluded by the hints but a more regular one. This regularity enables in turn the next process in the area remaining white. Indeed, the agent subsequently executes a brute-force searching that consists in moving to each white tile of $Tiling(k\cdot i_{\beta})$ of $S$ in order to scan it using the procedure {\tt RectangleScan}. If, after having scanned all the remaining white tiles, it has not found the treasure, the agent repaints white all the square $S$ and enters the next phase. Thus we have the guarantee that the agent finds the treasure by the end of phase $\lceil \log_2 D \rceil+1$, i.e., a phase in which the initial supposed square is large enough to contain the treasure. The question is: how much do we have to pay for all of this? In fact, the cost depends on the value that is assigned to $k$ in each phase $j$. The value of $k$ must be large enough so that the distance travelled by the agent during the brute-force searching is relatively small.  At the same time, this value must be small enough so that the the distance travelled during the $k$ steps is not too large. A good trade-off can be reached when $k=\lceil\log_{4^{i_\beta}}\sqrt{2^j}\rceil$. Indeed, as highlighted in the proof of correctness, it is due to this carefully chosen value of $k$ that we can beat the cost $\Theta(D^2)$ necessary without hints, and get a complexity of $\mathcal{O}(D^{2-\epsilon})$, where $\epsilon$ is a positive real depending on $i_\beta$, and hence depending on the angle $\beta$. 

\subsection{Algorithm and analysis}

In this subsection we describe our algorithm in detail, prove its correctness and analyze its complexity. We will use the notion of a slicing of a square.
Given a straight square $S$, the $Slicing(i)$ of $S$, for any integer $i\geq3$, is the partition of the square $S$ into $2^i$ triangles with a common vertex at the center $q$ of the square, resulting from partitioning the angle $2\pi$ into angles $\frac{2\pi}{2^i}$ using lines containing the point $q$, one of which is horizontal.

Consider any $Slicing(i)$ of a square $S$. Let $\Sigma$ be the set of all side lengths of triangles into which $Slicing(i)$ partitions $S$. We define $\rho_i$ to be the maximum of all integers $\lceil a/b \rceil$, where $a,b \in \Sigma$. Note that $\rho_i$ depends only on $i$ and not on the side length of $S$. Moreover, $\rho_{i+1}\geq\rho_i$.
%
%
For every integer $i\geq3$, we define $\phi(i)=i\rho_i$. 

In order to define some objects used by our algorithm, we need the following technical proposition.

\begin{proposition}
\label{pro:index}
The following properties hold.
\begin{enumerate}
\item For every angle $0<\alpha<2\pi$ with vertex at the center of a square $S$, the angle $\alpha$ contains some triangle of $Slicing(\max(3,\lceil\log_2(\frac{2\pi}{\alpha})\rceil+1))$ of $S$.
\item For every integer $i\geq3$ and for every triangle $T$ of $Slicing(i)$ of a square $S$, at least one tile of $Tiling(4\phi(i))$ of $S$ is included in the interior of $T$. \end{enumerate}
\end{proposition}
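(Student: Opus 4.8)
The plan is to establish the two items separately; after a little elementary geometry of slices and tiles, each reduces to the observation that an arithmetic progression with a fixed step meets every interval that is long enough.

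For the first item, set $i=\max(3,\lceil\log_2(2\pi/\alpha)\rceil+1)$ and let $\delta=2\pi/2^i$ be the apex angle common to all triangles of $Slicing(i)$ of $S$. I would first check that $\delta\le\alpha/2$: if $i=\lceil\log_2(2\pi/\alpha)\rceil+1$ then $2^i\ge 2\cdot 2^{\log_2(2\pi/\alpha)}=4\pi/\alpha$, hence $\delta\le\alpha/2$; and if $i=3$, then $\lceil\log_2(2\pi/\alpha)\rceil\le 1$ forces $\alpha\ge\pi$, so $\delta=\pi/4\le\alpha/2$. Viewing the angle $\alpha$, which has apex at the center $q$ of $S$, as the set of points whose direction from $q$ lies in an interval $[\beta_0,\beta_0+\alpha]$, let $k_0$ be the smallest integer with $k_0\delta\ge\beta_0$. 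Then $k_0\delta<\beta_0+\delta$, and since $2\delta\le\alpha$ we get $(k_0+1)\delta<\beta_0+\alpha$. Hence the triangle of $Slicing(i)$ whose directions from $q$ fill $[k_0\delta,(k_0+1)\delta]$ is entirely contained in the sector $[\beta_0,\beta_0+\alpha]$ (it shares the apex $q$ with $\alpha$), which proves the first item.

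For the second item --- the real work --- fix $i\ge 3$ and a triangle $T$ of $Slicing(i)$ of $S$; write $x$ for the side length of $S$ and $\theta=2\pi/2^i\le\pi/4$ for the apex angle of $T$ at the center $q$ of $S$. The first step is to note that $T$ is a non-degenerate triangle whose third side lies on a single side of $S$: since $i\ge 3$, the integer $2^i$ is a multiple of $8$, so the four corner directions $\pi/4+k\pi/2$ are themselves slicing directions, no slice has a corner of $S$ in the interior of its angular range, and both radial edges of $T$ emanating from $q$ reach the same side of $S$. The second step is a lower bound on the inradius $r$ of $T$: from $r=\mathrm{Area}(T)/s$ with $s$ the semiperimeter, $\mathrm{Area}(T)=\tfrac12 ab\sin\theta$ where $a,b$ are the two radial edges of $T$ (each at least $x/2$, the distance from $q$ to a side of $S$), $\sin\theta\ge\tfrac2\pi\theta=4/2^i$, and $s\le\tfrac32\cdot\sqrt2\,x$ (each side of $T$ is at most the diameter of $S$), one gets
\[
r\ \ge\ \frac{\tfrac12\,(x/2)^2\,(4/2^i)}{(3\sqrt2/2)\,x}\ =\ \frac{x}{3\sqrt2\cdot 2^i}.
\]
No use of $\rho_i$ is needed here; below only $\phi(i)=i\rho_i\ge i$ matters. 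The third step turns this disk into a tile: the open incircle of $T$ has radius $r$ and lies in $\mathcal I(T)$, and it contains a closed axis-aligned square $Q$ of side $r$ (half-diagonal $r/\sqrt2<r$), so $Q\subseteq\mathcal I(T)$; with $\ell=x/2^{4\phi(i)}$ the side of a tile of $Tiling(4\phi(i))$ of $S$, the inequality $4\phi(i)\ge 4i\ge i+4>i+\log_2(6\sqrt2)$ (valid for $i\ge 3$) gives $2^{4\phi(i)}\ge 6\sqrt2\cdot 2^i$, i.e. $r\ge 2\ell$; finally, any closed axis-aligned square of side at least $2\ell$ contained in $S$ contains, in each coordinate, two consecutive grid coordinates of $Tiling(4\phi(i))$ (a closed interval of length $\ge 2\ell$ meets the step-$\ell$ lattice in two consecutive points), hence contains a whole tile of $Tiling(4\phi(i))$, and that tile lies in $\mathcal I(T)$.

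The routine parts are the first item and the last two steps of the second item; the only slightly delicate point is the first step of the second item, namely justifying cleanly that $T$ is a genuine triangle with its third side on one side of $S$, so that the elementary bounds $a,b\ge x/2$ and $s\le\tfrac32\sqrt2\,x$ are legitimate. After that everything is bookkeeping, because $4\phi(i)$ is far larger than what the argument actually needs.
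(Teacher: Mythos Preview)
Your proof is correct, and for the second item it takes a genuinely different route from the paper's. The paper proves item~2 by an induction on $i$: the inductive hypothesis $\mathcal{H}_i$ asserts that every triangle of $Slicing(i)$ contains a tile of $Tiling(4\phi(i)-2)$ with one side lying on a side of $S$. The step from $Slicing(j)$ to $Slicing(j+1)$ takes the tile guaranteed by $\mathcal{H}_j$, subdivides it into tiles of $Tiling(4\phi(j+1)-2)$, and uses the definition of $\rho_{j+1}$ (the maximal ratio of side lengths among triangles of $Slicing(j+1)$) to argue that the bisecting segment cannot land near either end of the boundary row of subtiles --- this is exactly where the specific form $\phi(i)=i\rho_i$ is exploited. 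Your argument is instead direct: bound the inradius of $T$ from below via the area/semiperimeter formula, inscribe an axis-aligned square of side $r$ in the open incircle, and observe that any axis-aligned square of side at least $2\ell$ contained in $S$ traps a full tile of side $\ell$. You use only $\phi(i)\ge i$ (equivalently $\rho_i\ge 1$), so your proof shows that the choice of $\phi$ is far from tight and makes no essential use of the ratio $\rho_i$ the paper introduced for this purpose. The paper's route, in exchange, yields the slightly stronger intermediate statement that the tile can be taken with a side on the boundary of $S$, though this extra information is not used afterwards. Item~1 is handled the same way in both proofs.
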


\begin{proof}
We start by proving the first property. Let $S$ be a square and let $0<\alpha<2\pi$ be an angle with the vertex in the center of $S$.
Let $i=max(3; \lceil\log_2(\frac{2\pi}{\alpha})\rceil+1)\geq3$.
The angle at the center of square $S$ in each of the triangles of $Slicing(i)$ of $S$ is at most $\frac{\alpha}{2}$. 
Hence one of the triangles formed by $Slicing(i)$ is included in the angle $\alpha$. This proves the first property.


In the proof of the second property, all tilings and slicings are for square $S$: for ease of reading we omit mentioning it.
In order to prove the second property, we first prove by induction on $i$ the following statement denoted by $\mathcal{H}_i$:\\ 
For every integer $i\geq3$ and for every triangle $T$ of $Slicing(i)$, there is at least one tile $t$ of $Tiling(4\phi(i)-2)$, such that $t\subset T$ and one side of $t$ is included in a side of $S$.


For the base case $i=3$, note that each triangle of $Slicing(3)$ contains at least one tile of $Tiling(2)$ with one side  included in a side of $S$. Since $\phi(3)=3\rho_3$ and $\rho_3\geq1$, we know that $4\phi(3)-2\geq10$. Moreover,  each side of every tile $t'$ of $Tiling(r)$ contains at least one side of a tile of $Tiling(r')$, included in $t'$, for all integers $r<r'$. Hence $\mathcal{H}_3$ is true. 

Assume that $\mathcal{H}_j$ is true for some integer $j\geq3$ and let us prove that  $\mathcal{H}_{j+1}$ is also true. Suppose by contradiction that $\mathcal{H}_{j+1}$ is false. This means that there exists a triangle $T_1$ of $Slicing(j+1)$ that contains no tile of $Tiling(4\phi(j+1)-2)$ with one side included in a side of $S$. 
Denote by $L$ the side of $S$ that contains a side of $T_1$. There exists a triangle $T$ of $Slicing(j)$ and a triangle $T_2$ of $Slicing(j+1)$ such that $T_1\cup T_2=T$ and $T_1\cap T_2=l$, where $l$ is the common segment of boundaries of $T_1$ and $T_2$. Note that triangle $T_2$ also has a side included in $L$.

By the inductive hypothesis, there exists a tile $t'$ of $Tiling(4\phi(j)-2)$ such that $t'\subset T$ and one side of $t'$ is included in $L$. For any integers $r<r'$, every tile of $Tiling(r)$ contains exactly $4^{r'-r}$ tiles of $Tiling(r')$ that are organized in $2^{r'-r}$ rows of $2^{r'-r}$ squares. So, tile $t'$ contains exactly $4^{2\phi(j+1)-2\phi(j)}$ rows that are parallel to $L$ and such that each of them is made of $4^{2\phi(j+1)-2\phi(j)}$ tiles of $Tiling(4\phi(j+1)-2)$. Among these rows consider the one that has a common boundary with $L$ and denote it by $R$. Note that $R$ contains at least $4^{2\rho_{j+1}}$ tiles of $Tiling(4\phi(j+1)-2)$ because $2\phi(j+1)-2\phi(j)=2(j+1)\rho_{j+1}-2j\rho_{j}$ and $\rho_{j+1}\geq\rho_{j}$. Denote by $R'$ the row of $Tiling(4\phi(j+1)-2)$ that contains $R$ and by $R''$ the part of $R'$ made of tiles $t''$ of $Tiling(4\phi(j+1)-2)$, such that $t''\subset T$. Note that $R\subseteq R''$ and thus $R''$ contains at least $4^{2\rho_{j+1}}$ tiles of $Tiling(4\phi(j+1)-2)$. Moreover, note that the smaller of the two angles formed by $l$ and $L$ cannot be smaller than $\frac{\pi}{4}$ or larger than $\frac{\pi}{2}$. As a result, $l$ can intersect at most $2$ adjacent tiles $s_1,s_2$ of $R''$.  We will show that $l$ cannot intersect a tile that is at an end of row $R''$. Let $x$ be the side length of a tile of $Tiling(4\phi(j+1)-2)$.
Suppose that $l$ intersects a tile that is at an end of row $R''$.
In view of the fact that $R''$ contains all tiles of $R$ that are included in $T$, a side of $T_1$ or of $T_2$ included in $L$ (say the side of $T_1$ without loss of generality), has length at most $3x$, while the side of $T_2$ included in  $L$ has  length at least $(4^{2\rho_{j+1}}-2)x\geq 14\rho_{j+1}x$.  However, $\frac{14\rho_{j+1}x}{3x}>\rho_{j+1}$, which contradicts the definition of $\rho_{j+1}$. Hence $l$ cannot intersect a tile that is at an end of row $R''$.
This implies that one of the two tiles at the ends of $R''$ belongs to $T_1$: by construction this tile belongs to $Tiling(4\phi(j+1)-2)$ with one side belonging to $L$. Hence we get a contradiction. As a result, $\mathcal{H}_{j+1}$ is true, which ends the proof by induction of $\mathcal{H}_{i}$.

It remains to conclude the proof of the second property of our proposition. In view of property $\mathcal{H}_i$, we know that for every integer $i\geq3$ and for every triangle $T$ of $Slicing(i)$, at least one tile of $Tiling(4\phi(i)-2)$ is included in $T$. Moreover, each tile of $Tiling(4\phi(i)-2)$ contains $4$ rows, each made of $4$ tiles belonging to $Tiling(4\phi(i))$. Hence, the interior of each tile of $Tiling(4\phi(i)-2)$ contains a tile of $Tiling(4\phi(i))$. This proves the second property and concludes the proof of the proposition.
\end{proof}

For any angle $0<\alpha<2\pi$, the index of $\alpha$, denoted $index(\alpha)$, is the integer $4\phi(max(3,\lceil\log_2(\frac{2\pi}{\alpha})\rceil+1))$.
Proposition \ref{pro:index} implies


\begin{proposition}
\label{pro:index2}
For every angle $0<\alpha<2\pi$, the following properties hold.

\begin{enumerate}
\item
For every square $S$ and for every hint $(P_1,P_2)$ of size $2\pi-\alpha$ obtained at the center of $S$, there exists a tile of $Tiling(index(\alpha))$ of $S$ included in $\overline{(P_1,P_2)}$.
\item
 For every angle $\alpha'<\alpha$, we have  $index({\alpha}) \leq index({\alpha'}) $.
\end{enumerate}
\end{proposition}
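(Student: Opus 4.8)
The plan is to obtain both parts of the proposition as immediate consequences of Proposition~\ref{pro:index} and of the monotonicity of the sequence $(\rho_i)$ recorded just before its definition, together with a little bookkeeping involving the definition of $index$.

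For the first part, I would begin by observing that a hint $(P_1,P_2)$ of size $2\pi-\alpha$ obtained at the center of $S$ has complement $\overline{(P_1,P_2)}$ equal to an angular region of measure $\alpha$ with vertex at the center of $S$, so that Proposition~\ref{pro:index} applies to it verbatim. Writing $i=\max(3,\lceil\log_2(\frac{2\pi}{\alpha})\rceil+1)$, so that by definition $index(\alpha)=4\phi(i)$, I would apply the first part of Proposition~\ref{pro:index} to the angle $\overline{(P_1,P_2)}$ to get a triangle $T$ of $Slicing(i)$ of $S$ contained in $\overline{(P_1,P_2)}$, and then the second part of Proposition~\ref{pro:index} to get a tile $t$ of $Tiling(4\phi(i))$ of $S$ included in the interior of $T$. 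Since $\mathcal{I}(T)\subseteq T\subseteq\overline{(P_1,P_2)}$ and $4\phi(i)=index(\alpha)$, the tile $t$ is the required tile of $Tiling(index(\alpha))$ of $S$ included in $\overline{(P_1,P_2)}$.

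For the second part, let $\alpha'<\alpha$. From $\frac{2\pi}{\alpha'}>\frac{2\pi}{\alpha}$ I would deduce $\lceil\log_2(\frac{2\pi}{\alpha'})\rceil\geq\lceil\log_2(\frac{2\pi}{\alpha})\rceil$, hence $i':=\max(3,\lceil\log_2(\frac{2\pi}{\alpha'})\rceil+1)\geq i$, with $i$ as above. Using $\rho_{r+1}\geq\rho_r$ for all $r\geq3$ (so that $\rho_{i'}\geq\rho_i\geq1$) together with $i'\geq i\geq3$, I would conclude $\phi(i')=i'\rho_{i'}\geq i\rho_i=\phi(i)$, and therefore $index(\alpha')=4\phi(i')\geq4\phi(i)=index(\alpha)$.

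I do not expect any real obstacle here, since the combinatorial-geometric content has already been established in Proposition~\ref{pro:index}; the only points needing a little attention are the bookkeeping with the definition of $index$, and the remark that Proposition~\ref{pro:index} is phrased for an \emph{angle} with vertex at the center of a square, so one must first notice that the complement of a hint of size $2\pi-\alpha$ is precisely such an angle, of measure $\alpha$, before invoking it.
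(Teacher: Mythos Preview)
Your proof is correct and is exactly the approach the paper takes: the paper simply states that Proposition~\ref{pro:index} implies the result, and you have spelled out precisely that implication (apply part~1 then part~2 of Proposition~\ref{pro:index} for the first claim, and use the monotonicity of $i\mapsto\rho_i$ for the second).
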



Algorithm \ref{alg:obtus} gives a pseudo-code of the main algorithm of this section. It uses the function {\tt Mosaic} described in Algorithm \ref{alg:mosaic}  that is the key technical tool permitting the agent to reduce its search area. The agent interrupts the execution of Algorithm  \ref{alg:obtus} as soon as it gets at distance 1 from the treasure, at which point it can ``see'' it and thus treasure hunt stops.

\begin{algorithm}
				\caption{{\tt TreasureHunt2}}
				\label{alg:obtus}
				\begin{footnotesize}
				\begin{algorithmic}[1]
					\State $IndexNew:=1$
					\State $i:=1$
					\Loop 
    					\Repeat
						\State $IndexOld:= IndexNew$\label{obt:1}
						\State $IndexNew:=$ {\tt Mosaic}$(i,IndexOld)$\label{obt:2}
					\Until{$IndexNew=IndexOld$}
					\State $ i:=i+1$
					\EndLoop				
				\end{algorithmic}
				\end{footnotesize}
			\end{algorithm}
			
			In the following, a square is called black if all its points are black. A square is called white if all points of its interior are white. (In a white square, some points
of its border may be black).

\begin{algorithm}
				\caption{Function {\tt Mosaic}($i$,$k$)}
				\label{alg:mosaic}
				\begin{footnotesize}
				\begin{algorithmic}[1]
					\State $O$:= the initial position of the agent \label{l1:debut}
					\State $S$:= the straight square centered at $O$ with sides of length $2^i$ 
					\State Paint white all points of $S$
					\State $IndexMax$:=$k$ \label{l:assign}
					\For{$j=1$ {\bf to $\lceil\log_{4^k}\sqrt{2^i}\rceil$}}\label{loop:debut}
						\ForAll{tiles $t$ of $Tiling((j-1)k)$ of $S$}
							\If{$t$ is white}
							\State Go to the center of $t$
							\State Let $(P_1,P_2)$ be the obtained hint
							\State $k'$:= index of $\overline{(P_1,P_2)}$
							\If{$k'>IndexMax$} \label{l:proof}
								\State $IndexMax$:=$k'$\label{l:proof2}
							\EndIf
							\If{$IndexMax = k$}
								\ForAll{tiles $t'$ of $Tiling(k)$ of $t$ such that $t'\subset\overline{(P_1,P_2)}$}\label{line:pb1}
								\State Paint black all points of $t'$\label{line:pb}
								\EndFor\label{line:pb2}
							\EndIf
							\EndIf
						\EndFor	\label{loop:fin}
					\EndFor \label{l1:fin}
					\If{$IndexMax=k$} \label{l2:debut}
						\ForAll{tiles $t$ of $Tiling(k(\lceil\log_{4^k}\sqrt{2^i}\rceil))$ of $S$}\label{b2:debut}
							\If{$t$ is white}
								\State Go to the center of $t$
								\State Execute {\tt RectangleScan}($t$)
							\EndIf
						\EndFor\label{b2:fin}
					\EndIf
					\State Go to $O$ \label{l2:avfin}
					\State \Return $IndexMax$	\label{l2:fin}	
				\end{algorithmic}
				\end{footnotesize}
			\end{algorithm}




\begin{lemma}
\label{lem:ob1}
For any positive integers $i$ and $k$,
consider an agent executing function {\tt Mosaic}($i$,$k$) from its initial position $O$. Let $S$ be the straight square centered at $O$ with side of length $2^i$. For every positive integer $j\leq\lceil\log_{4^k}\sqrt{2^i}\rceil$, at the end of the $j$-th execution of the first loop (lines~\ref{loop:debut} to~\ref{loop:fin}) in {\tt Mosaic}($i$,$k$), each tile of $Tiling(jk)$ of $S$ is either black or white. 
\end{lemma}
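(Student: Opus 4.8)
The plan is to prove Lemma~\ref{lem:ob1} by induction on $j$, mirroring the informal description of the steps of a phase given in Section~\ref{sub:hgi}. The statement to establish is that after the $j$-th pass through the loop of lines~\ref{loop:debut} to~\ref{loop:fin}, every tile of $Tiling(jk)$ of $S$ is either black (all its points black) or white (all its interior points white).

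\medskip

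\textbf{Base case $j=1$.} At the start of the loop all of $S$ is white, so every tile of $Tiling(0)$ of $S$ — there is just $S$ itself — is white, and the agent goes to the center of $S$ (the unique tile $t$ of $Tiling(0)$, which is white) and obtains a hint $(P_1,P_2)$. By Proposition~\ref{pro:index2}(1) applied with the square $S$, there is at least one tile of $Tiling(index(\overline{(P_1,P_2)}))$ of $S$ contained in $\overline{(P_1,P_2)}$; in the branch where $IndexMax=k$ (so $k'=index(\overline{(P_1,P_2)})\le k$), every tile $t'$ of $Tiling(k)$ of $S$ with $t'\subset\overline{(P_1,P_2)}$ is painted black (lines~\ref{line:pb1}–\ref{line:pb2}). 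Since a tile of $Tiling(k)$ refines a tile of $Tiling(k')$ for $k'\le k$, the argument is: any tile of $Tiling(k)$ is either entirely painted black in this step, or it is not, and in the latter case none of its interior points was painted (painting only ever colors whole tiles of $Tiling(k)$), hence it remains white. Thus each tile of $Tiling(1\cdot k)$ of $S$ is black or white. One should also note that if $IndexMax$ becomes larger than $k$ at some point (line~\ref{l:proof2}), no painting is done at all and all tiles stay white, so the conclusion still holds trivially; I will dispose of this degenerate case once at the outset.

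\medskip

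\textbf{Inductive step.} Assume the claim holds after pass $j$: each tile of $Tiling(jk)$ of $S$ is black or white. In pass $j+1$ the agent visits the center of each \emph{white} tile $t$ of $Tiling(jk)$ of $S$. The key observations are: (i) the black tiles of $Tiling(jk)$ are untouched, and since they are unions of tiles of $Tiling((j+1)k)$ (as $(j+1)k>jk$), every tile of $Tiling((j+1)k)$ contained in a black tile of $Tiling(jk)$ stays black; (ii) for a white tile $t$ of $Tiling(jk)$, the agent gets a hint at the center of $t$, and applying Proposition~\ref{pro:index2}(1) with the square $t$ in place of $S$, at least one tile of $Tiling(index(\overline{(P_1,P_2)}))$ of $t$ lies in the excluded zone; in the branch $IndexMax=k$, the agent paints black exactly those tiles $t'$ of $Tiling(k)$ of $t$ that lie in $\overline{(P_1,P_2)}$; since a tile of $Tiling(k)$ of $t$ is precisely a tile of $Tiling(jk+k)=Tiling((j+1)k)$ of $S$ contained in $t$ (because $Tiling(k)$ of $t$ is the same partition as the restriction to $t$ of $Tiling((j+1)k)$ of $S$, using that $t$ itself is a tile of $Tiling(jk)$ of $S$), the painting again acts by coloring whole tiles of $Tiling((j+1)k)$ of $S$ black. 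Hence every tile of $Tiling((j+1)k)$ of $S$ contained in $t$ is, after processing $t$, either black or still white in its interior. Combining (i) and (ii) over all tiles of $Tiling(jk)$, every tile of $Tiling((j+1)k)$ of $S$ is black or white, completing the induction.

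\medskip

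The main obstacle I expect is purely bookkeeping about the nesting of tilings: making precise that $Tiling(k)$ of a tile $t$ of $Tiling(jk)$ of $S$ is the same set of squares as $\{t'\in Tiling((j+1)k)\text{ of }S : t'\subset t\}$, and that ``black'' and ``white'' are consistent across refinements (a tile of $Tiling((j+1)k)$ is black iff it is all-black, which happens iff it lies inside a painted tile; it is white iff its interior is untouched). Because painting only ever colors complete tiles of the current finest level, and because neighboring tiles share only boundary points (so painting the border of a neighbor does not make an interior point black), the dichotomy is preserved — this is the point to state carefully. Everything else is a direct application of Proposition~\ref{pro:index2}(1) and the inductive hypothesis.
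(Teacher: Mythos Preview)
Your argument is correct and follows essentially the same route as the paper: the key observation in both is that every painting operation during the $j$-th pass colors complete tiles of $Tiling(jk)$ of $S$ (because $Tiling(k)$ of a tile of $Tiling((j-1)k)$ coincides with the restriction of $Tiling(jk)$), so the black/white dichotomy is preserved; the paper phrases this via a minimal-counterexample contradiction rather than a direct induction, but the content is identical. One small remark: your appeals to Proposition~\ref{pro:index2}(1) are not needed for this lemma --- whether or not any tile actually gets painted is irrelevant to the dichotomy statement, and that proposition only enters later when bounding the remaining white area.
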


\begin{proof}
Assume by contradiction that there exists a positive integer $j\leq\lceil\log_{4^k}\sqrt{2^i}\rceil$ such that at the end of the $j$-th execution of the first loop, there exists at least one tile $\sigma$ of $Tiling(jk)$ of $S$ that is neither black nor white. Without loss of generality, we assume that $j$ is the first integer for which this occurs.

In view of the minimality of $j$, we know that just before starting the $j$-th execution of  the first loop, each tile of $Tiling((j-1)k)$ is either black or white. Moreover, for every positive integers $z'\leq z$, every couple of points that belong to the same tile of $Tiling(z)$ of $S$, also belong to the same tile of  the coarser tiling $Tiling(z')$ of $S$. Hence, just before starting the $j$-th execution of the first loop, each tile of $Tiling(jk)$ is either black or white. 

During the execution of the first loop, the points that become black remain always black thereafter. Since there exists a tile $\sigma$ of $Tiling(jk)$ of $S$ that becomes neither black nor white during the $j$-th execution of the first loop, at some point during this execution, the agent does not paint black all points of $\sigma$ when executing line~\ref{line:pb} of Algorithm~\ref{alg:mosaic}. However, each time the agent executes line~\ref{line:pb} of Algorithm~\ref{alg:mosaic} within the $j$-th execution of the first loop, when a point of a tile $t'$ of $Tiling(k)$ of any tile of $Tiling((j-1)k)$ of $S$ is painted black, then all points inside and on the boundary of tile $t'$ are painted black. By definition, $t'$ is a tile of $Tiling(jk)$ of $S$. Hence, at the end of the $j$-th execution of the first loop, each tile of $Tiling(jk)$ of $S$ is either black or white. Hence, we get a contradiction with the existence of $\sigma$ which proves the lemma.
\end{proof}

\begin{lemma}
\label{lem:ob2}
For every positive integers $i$ and $k$, a call to function {\tt Mosaic}($i$,$k$) has cost at most $2^{i\frac{3+\log_{4^k}(4^k-1)}{2}+2k+8}$.
\end{lemma}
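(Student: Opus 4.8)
The goal is to bound the total distance travelled in a single call to {\tt Mosaic}$(i,k)$ by $2^{i\frac{3+\log_{4^k}(4^k-1)}{2}+2k+8}$. The cost of the call splits into three parts: (i) the cost of the $\lceil\log_{4^k}\sqrt{2^i}\rceil$ iterations of the first loop (lines~\ref{loop:debut}--\ref{loop:fin}), where the agent visits centers of white tiles and paints; (ii) the cost of the brute-force search (lines~\ref{b2:debut}--\ref{b2:fin}), where the agent visits centers of surviving white tiles and runs {\tt RectangleScan}; and (iii) the final {\tt Go to $O$}. The plan is to bound each part separately and then add.

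\textbf{Controlling how much white area survives each step.} The heart of the argument is a geometric decay estimate: at every step $s$ of the first loop, every white tile $t$ of $Tiling((s-1)k)$ of $S$ contains, by Proposition~\ref{pro:index2}(1) applied at the center of $t$ (using that $k=IndexMax$ is at least the index of the hint, since otherwise the painting is skipped), at least one tile of $Tiling(k)$ of $t$ lying in the excluded zone; that tile gets painted black. Hence at least a $\tfrac{1}{4^k}$ fraction of the area of each white tile is painted, so the total white area is multiplied by at most $\bigl(1-\tfrac{1}{4^k}\bigr)=\tfrac{4^k-1}{4^k}$ at each step. After $j$ steps the white area is at most $2^{2i}\bigl(\tfrac{4^k-1}{4^k}\bigr)^{j}$, and since all surviving white tiles of $Tiling(jk)$ are disjoint (up to boundaries) and each has area $2^{2i}/4^{jk}$, the \emph{number} of white tiles of $Tiling(jk)$ is at most $4^{jk}\bigl(\tfrac{4^k-1}{4^k}\bigr)^{j}=(4^k-1)^{j}$. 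This is the key counting bound; I expect this to be where the real content lies, and the one subtlety to check carefully is that Lemma~\ref{lem:ob1} guarantees each tile is cleanly black-or-white, so the area/counting bookkeeping is valid.

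\textbf{Summing the movement costs.} For part (i): in step $j$ the agent visits at most $(4^k-1)^{j-1}$ white tiles of $Tiling((j-1)k)$; consecutive visits (and the excursions from the previous configuration) cost at most the diameter of $S$, which is $2^{i}\sqrt2\le 2^{i+1}$, so step $j$ costs at most $(4^k-1)^{j-1}\cdot 2^{i+1}$. Summing the geometric series over $j$ up to $J:=\lceil\log_{4^k}\sqrt{2^i}\rceil$ gives at most $2^{i+1}\cdot\frac{(4^k-1)^{J}-1}{(4^k-1)-1}\le 2^{i+1}(4^k-1)^{J}$ (using $4^k-1\ge 3$, so the denominator is $\ge2$; constant absorbed). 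Since $J\le \log_{4^k}\sqrt{2^i}+1$, we get $(4^k-1)^{J}\le 4^k\cdot(4^k-1)^{\log_{4^k}\sqrt{2^i}} = 4^k\cdot 2^{\frac{i}{2}\log_{4^k}(4^k-1)}$, so part~(i) is at most $2^{i+1}\cdot 4^k\cdot 2^{\frac{i}{2}\log_{4^k}(4^k-1)}$. For part (ii): there are at most $(4^k-1)^{J}$ surviving white tiles of $Tiling(kJ)$, each of side length $2^i/2^{kJ}\le 2^i/\sqrt{2^i}=2^{i/2}$; reaching each costs at most $2^{i+1}$, and by Proposition~\ref{prelim} scanning one costs at most $5\cdot 2^{i/2}\cdot\max(2^{i/2},2)\le 5\cdot 2^{i}$; so part~(ii) is at most $(4^k-1)^{J}\cdot(2^{i+1}+5\cdot 2^{i})\le (4^k-1)^{J}\cdot 2^{i+3}$, which by the same bound on $(4^k-1)^{J}$ is at most $4^k\cdot 2^{i+3}\cdot 2^{\frac{i}{2}\log_{4^k}(4^k-1)}$. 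Part (iii) costs at most $2^{i+1}$, negligible.

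\textbf{Combining.} Adding the three parts, the dominant term is $O\bigl(4^k\cdot 2^{i+3}\cdot 2^{\frac{i}{2}\log_{4^k}(4^k-1)}\bigr) = 2^{2k}\cdot 2^{i+3}\cdot 2^{\frac{i}{2}\log_{4^k}(4^k-1)}$, and one checks the crude constant factors (the three parts plus the geometric-series slack) are absorbed into the additive $+8$ in the exponent, i.e. the sum is at most $2^{2k}\cdot 2^{i+8}\cdot 2^{\frac{i}{2}\log_{4^k}(4^k-1)} = 2^{i\frac{3+\log_{4^k}(4^k-1)}{2}+2k+8}$, since $i+\tfrac{i}{2}\log_{4^k}(4^k-1) = i\frac{2+\log_{4^k}(4^k-1)}{2}\le i\frac{3+\log_{4^k}(4^k-1)}{2}$ (in fact there is room to spare, which covers the constant slack). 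The main obstacle is getting the white-tile count $(4^k-1)^{j}$ right and making sure the per-step travel truly telescopes into a geometric sum dominated by its last term; once that is in hand, the rest is routine arithmetic with the exponents.
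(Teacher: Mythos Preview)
Your overall strategy is sound and close to the paper's, but there is a concrete arithmetic mistake that breaks the scanning-cost estimate, and a case you silently skip.

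\textbf{The side-length error.} You write $l=2^i/2^{kJ}\le 2^i/\sqrt{2^i}=2^{i/2}$, i.e.\ you use $2^{kJ}\ge\sqrt{2^i}$. But $J=\lceil\log_{4^k}\sqrt{2^i}\rceil=\lceil i/(4k)\rceil$, so what you actually know is $(4^k)^J=2^{2kJ}\ge\sqrt{2^i}$, hence only $kJ\ge i/4$ and $l\le 2^{3i/4}$. (Concretely, for $i=100$, $k=1$: $J=25$, $l=2^{75}\gg 2^{50}$.) Consequently the per-tile scan cost is about $5l^2\approx 5\cdot 2^{3i/2}$, not $5\cdot 2^i$, and your part~(ii) bound $(4^k-1)^J\cdot 2^{i+3}$ does \emph{not} dominate the true cost; it is an underestimate by roughly a factor $2^{i/2}$. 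The fix is the one the paper uses: do not bound $l$ and $w$ separately for the scanning term, but bound the \emph{total scanned area} directly. Your own decay argument already gives $w\,l^2\le\bigl(\tfrac{4^k-1}{4^k}\bigr)^J\cdot 2^{2i}\le 2^{i(3+\log_{4^k}(4^k-1))/2}$, and $5\,w\,l^2$ (plus a lower-order $O(wl)$ term to absorb the $\max(l,2)$) then fits under the target exponent. Your separate bound $w\le(4^k-1)^J$ is still useful for the travel-to-tile cost $w\cdot 2^{i+1}$, which is fine as you computed.

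\textbf{The skipped case $IndexMax\neq k$.} Your count ``at most $(4^k-1)^{j-1}$ white tiles at step $j$'' relies on painting actually occurring at every visited center; the algorithm only paints when $IndexMax=k$. If some hint has index $k'>k$, painting stops and your geometric decay fails for all later steps, so your part~(i) bound $2^{i+1}(4^k-1)^J$ is no longer justified (indeed $\sum_j 4^{(j-1)k}$ can exceed $(4^k-1)^J$). The paper handles this by using the crude count $4^{(j-1)k}$ for part~(i), which sums to at most $2\sqrt{2^i}$ regardless of painting, yielding $|\mathcal P_1|\le 2^{3i/2+2}$; then the case split is only needed for part~(ii), which is simply skipped when $IndexMax\neq k$. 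You should either adopt that crude bound for part~(i), or explicitly split on whether $IndexMax$ ever changes and observe that in the bad case part~(ii) vanishes and the crude bound suffices.
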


\begin{proof}
The walk made by the agent executing function {\tt Mosaic}($i$,$k$) can be divided into two parts: the first part $\mathcal{P}_1$ is the walk made by executing lines~\ref{l1:debut} to~\ref{l1:fin} of Algorithm~\ref{alg:mosaic}, while the second part $\mathcal{P}_2$ is the walk made by executing lines~\ref{l2:debut} to~\ref{l2:fin} of Algorithm~\ref{alg:mosaic}. The distance traveled in $\mathcal{P}_1$ (resp. $\mathcal{P}_2$) will be denoted by $|\mathcal{P}_1|$ (resp. $|\mathcal{P}_2|$). We first focus on the distance traveled in part $\mathcal{P}_1$, in which the walk made by the agent is as follows: for each $j\in\{1,\ldots \lceil\log_{4^k}\sqrt{2^i}\rceil\}$, starting from the center of $S$, the agent moves to the center of every white tile of $Tiling((j-1)k)$ of $S$. By Algorithm~\ref{alg:mosaic}, the side length of $S$ is $2^i$, and thus the distance between any two points of $S$ is upper bounded by $2^{i+1}$. Moreover, if for every non-negative integer $s$, we denote by $\mathcal{Q}_s$ the number of tiles in $Tiling(s)$ of $S$, then we have

\begin{align}
|\mathcal{P}_1| &\leq 2^{i+1}\sum_{j=1}^{\lceil\log_{4^k}\sqrt{2^i}\rceil}\mathcal{Q}_{(j-1)k}\label{form:1}
\end{align}

In view of the definition of a tiling,  for all $j\in\{1,\ldots \lceil\log_{4^k}\sqrt{2^i}\rceil\}$ we have
\begin{align}
\label{form:2}
\mathcal{Q}_{(j-1)k} &= \frac{\mathcal{Q}_{(\lceil\log_{4^k}\sqrt{2^i}\rceil-1)k}}{4^{(\lceil\log_{4^k}\sqrt{2^i}\rceil-1)k-(j-1)k}}
\end{align}
\begin{align}
\label{form:3}
~&= \frac{\mathcal{Q}_{(\lceil\log_{4^k}\sqrt{2^i}\rceil-1)k}}{4^{(\lceil\log_{4^k}\sqrt{2^i}\rceil-j)k}}
\end{align}

Hence, in view of (\ref{form:1}) and (\ref{form:3}), we have 

\begin{align}
\label{form:4}
|\mathcal{P}_1| &\leq 2^{i+1}\sum_{j=1}^{\lceil\log_{4^k}\sqrt{2^i}\rceil}\frac{\mathcal{Q}_{(\lceil\log_{4^k}\sqrt{2^i}\rceil-1)k}}{4^{(\lceil\log_{4^k}\sqrt{2^i}\rceil-j)k}}\\
~&\leq 2^{i+2}\mathcal{Q}_{(\lceil\log_{4^k}\sqrt{2^i}\rceil-1)k}\label{form:5}
\end{align}

In view of the definition of a tiling, we have 

\begin{align}
\label{form:6}
\mathcal{Q}_{(\lceil\log_{4^k}\sqrt{2^i}\rceil-1)k} &= 4^{(\lceil\log_{4^k}\sqrt{2^i}\rceil-1)k}\\
~&= (4^k)^{\lceil\log_{4^k}\sqrt{2^i}\rceil-1}\\
~&\leq \sqrt{2^i}\label{form:9}
\end{align}

Hence from (\ref{form:5}) and (\ref{form:9}), we obtain

\begin{align}
\label{form:10}
|\mathcal{P}_1| &\leq 2^{\frac{3i}{2}+2}
\end{align}

We now consider the distance traveled in part $\mathcal{P}_2$. Here, there are two cases: either $IndexMax\ne k$ when the agent starts executing line~\ref{l2:debut} of Algorithm~\ref{alg:mosaic}, or $IndexMax=k$ when the agent starts executing line~\ref{l2:debut} of Algorithm~\ref{alg:mosaic}. In the first case, $\mathcal{P}_2$ corresponds only to the move made when executing line~\ref{l2:avfin} of Algorithm~\ref{alg:mosaic}. However, during the entire execution of Algorithm~\ref{alg:mosaic}, the agent never leaves the straight square $S$, centered at $O$, whose sides have length $2^i$. Hence in the first case, $|\mathcal{P}_2|\leq2^{i+1}$.

The second case is trickier to analyze. Indeed, we have to take into account the distance traveled when executing line~\ref{l2:avfin} of Algorithm~\ref{alg:mosaic} (that is upper bounded by $2^{i+1}$ in this case as well) but also the distance traveled when executing lines~\ref{b2:debut} to~\ref{b2:fin}: note that since those lines are executed, we necessarily have the following claim in the second case.

\begin{claim}
\label{cl:1}
Once variable IndexMax is assigned the value $k$ (cf. line~\ref{l:assign} of Algorithm~\ref{alg:mosaic}), variable IndexMax does not change anymore thereafter.
\end{claim}
 
The above claim is used in the proof of the following one that is crucial to determine the traveled distance $|\mathcal{P}_2|$. As for Claim~\ref{cl:1}, Claim~\ref{cl:2} holds in the second case that we currently analyze.

\begin{claim}
\label{cl:2}
At the end of part $\mathcal{P}_1$, the area of the white surface is at most $2^{i\frac{3+\log_{4^k}(4^k-1)}{2}}$. 
\end{claim}

\begin{proofclaim}
To prove the claim, we first show by induction on $j$ the following property $\mathcal{K}_j$:\\ 
For every integer $j\in\{1,\ldots,\lceil\log_{4^k}\sqrt{2^i}\rceil\}$, at the end of the $j$-th execution of the first loop of Algorithm~\ref{alg:mosaic} the area of the part of the square $S$ that is still white is at most $(\frac{4^k-1}{4^k})^j2^{2i}$.

During the first execution of the first loop of Algorithm~\ref{alg:mosaic}, the agent is located at the center of $S$. By Claim~\ref{cl:1}, the agent executes line~\ref{line:pb} during this first execution, and by Proposition~\ref{pro:index2}, there is at least one tile $t'$ of $Tiling(k)$ of $S$ such that all points of $t'$ are black. Since there are $4^k$ tiles in $Tiling(k)$ of $S$, it follows that property $\mathcal{K}_j$ is true for $j=1$. Now suppose that property $\mathcal{K}_s$  holds for a positive integer $s$. We  show that $\mathcal{K}_{s+1}$ is also true. It is enough to show that at the end of the $(s+1)$-th execution of the first loop of Algorithm~\ref{alg:mosaic} the part of the square $S$ that is still white has area at most $(\frac{4^k-1}{4^k})^{s+1}2^{2i}$. In view of Claim~\ref{cl:1} and Algorithm~\ref{alg:mosaic}, during this $(s+1)$-th execution the agent goes to the center of every white tile of $Tiling(sk)$ of $S$ from which it executes line~\ref{line:pb} of Algorithm~\ref{alg:mosaic}. Moreover, by Claim~\ref{cl:1}, we know that the value of variable $k'$ is never larger than $k$. Hence, by Proposition~\ref{pro:index2}, it follows that the agent paints black at least $(\frac{1}{4^k})$-th of each white tile of $Tiling(sk)$ of $S$ during this $(s+1)$-th execution. However, at the beginning of the $(s+1)$-th execution of the first loop, we know from the inductive hypothesis and from Lemma~\ref{lem:ob1}, that the sum of the areas of the white tiles of $Tiling(sk)$ is at most $(\frac{4^k-1}{4^k})^s2^{2i}$. Moreover, by painting black at least $(\frac{1}{4^k})$-th of each white tile of $Tiling(sk)$ of $S$, the agent paints black at least $(\frac{1}{4^k})$-th of the remaining surface that is white at the beginning of the $(s+1)$-th execution of the first loop. This implies $\mathcal{K}_{s+1}$, which concludes the proof by induction of  $\mathcal{K}_{j}$.

From property $\mathcal{K}_j$ with $j\in\{1,\ldots,\lceil\log_{4^k}\sqrt{2^i}\rceil\}$, we know that at the end of part $\mathcal{P}_1$, the area of the white surface is at most 

\begin{align}
\label{form:11}
2^{2i}(\frac{4^k-1}{4^k})^{\lceil\log_{4^k}\sqrt{2^i}\rceil}\leq 2^{2i}(\frac{4^k-1}{4^k})^{\log_{4^k}\sqrt{2^i}}
\end{align}

However, we have 

\begin{align}
\label{form:12}
(\frac{4^k-1}{4^k})^{\log_{\frac{4^k-1}{4^k}}\sqrt{2^i}}= \sqrt{2^i}
\end{align}

which implies 

\begin{align}
\label{form:13}
(\frac{4^k-1}{4^k})^{\log_{4^k}\sqrt{2^i}}= 2^{\frac{i}{2}\log_{4^k}(\frac{4^k-1}{4^k})}.
\end{align}

It follows from (\ref{form:11}) and (\ref{form:13}) that the area of the white surface at the end of part $\mathcal{P}_1$ is at most 

\begin{align}
\label{form:14}
2^{2i+\frac{i}{2}\log_{4^k}(\frac{4^k-1}{4^k})}=2^{i\frac{3+\log_{4^k}(4^k-1)}{2}},
\end{align}

which concludes the proof of the claim.
\end{proofclaim}

Now, we are ready to compute $|\mathcal{P}_2|$ in the case where the condition $IndexMax=k$ holds when the agent executes line~\ref{l2:debut} of Algorithm~\ref{alg:mosaic}. The value of $|\mathcal{P}_2|$ is the sum of the distance traveled when executing line~\ref{l2:avfin} (upper bounded by $2^{i+1}$) and of the distance traveled when executing lines~\ref{b2:debut} to~\ref{b2:fin}. When executing the latter block of lines, for each white tile $t$ of $Tiling(k(\lceil\log_{4^k}\sqrt{2^i}\rceil))$ of $S$, the agent performs successively the two following actions:
\begin{enumerate}
\item The agent moves to the center of $t$, at a cost of at most $2^{i+1}$. 
\item Once the center of $t$ is reached, the agent executes procedure {\tt RectangleScan}($t$), at a cost of at most $5l \cdot max(2,l)$ (cf. Proposition~\ref{prelim}) with $l$ equal to the side length of tile $t$.
\end{enumerate}
Hence, if we denote by $w$ the number of white tiles in $Tiling(k(\lceil\log_{4^k}\sqrt{2^i}\rceil))$ of $S$, we have 

\begin{align}
\label{form:15}
|\mathcal{P}_2|&\leq 2^{i+1}+w(2^{i+1}+5l\cdot max(2,l))\\
~&\leq 2^{i+1}(w+1)+8w\cdot l \cdot max(2,l)\\
~&\leq 2^{i+1}(w+1)+8w \cdot l^2+{32}w \label{form:17}
\end{align}

By the definition of tiling we have $w\leq 4^{k(\lceil\log_{4^k}\sqrt{2^i}\rceil)}\leq 2^{{2}k+\frac{i}{2}}$. Moreover, in view of Claim~\ref{cl:2} and Lemma~\ref{lem:ob1}, we know that $w\cdot l^2\leq 2^{i\frac{3+\log_{4^k}(4^k-1)}{2}}$. Thus, from (\ref{form:17}) we have the following:

\begin{align}
\label{form:18}
|\mathcal{P}_2|&\leq 2^{{2}k+\frac{3i}{2}+1}+2^{i+1}+2^{i\frac{3+\log_{4^k}(4^k-1)}{2}+3}+2^{2k+\frac{i}{2}+{5}}\\
~&\leq 2^{i\frac{3+\log_{4^k}(4^k-1)}{2}+2k+7} \mbox{~~~~~~}(because\mbox{~}k\geq1).
\end{align}
So, whether $IndexMax= k$ or not when the agent starts executing line~\ref{l2:debut} of Algorithm~\ref{alg:mosaic}, we have $|\mathcal{P}_2|\leq 2^{i\frac{3+\log_{4^k}(4^k-1)}{2}+2k+{7}}$. Hence,
$|\mathcal{P}_1|+|\mathcal{P}_2|\leq 2^{\frac{3i}{2}+2}+ 2^{i\frac{3+\log_{4^k}(4^k-1)}{2}+2k+7}\leq2^{i\frac{3+\log_{4^k}(4^k-1)}{2}+2k+8}$, which concludes the proof of the lemma.
\end{proof}

Let $\psi$ be the index of $2 \pi - \beta$. The next proposition follows from Proposition~\ref{pro:index2}.

\begin{proposition}
\label{pro:maxin}
Let $(P_1,P_2)$ be any hint. The index of $\overline{(P_1,P_2)}$ is at most $\psi$. 
\end{proposition}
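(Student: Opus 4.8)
The plan is short: this is essentially a monotonicity observation. First I would recall that a hint $(P_1,P_2)$ is an angle of some size $\gamma$ with $0<\gamma<2\pi$, and that the standing assumption of this section is that all hints are angles bounded above by $\beta$, so $0<\gamma\le\beta<2\pi$. Its complement $\overline{(P_1,P_2)}$ is therefore an angle of size $2\pi-\gamma$, and from $\gamma\le\beta$ together with $\gamma>0$ and $\beta<2\pi$ we obtain the chain $0<2\pi-\beta\le 2\pi-\gamma<2\pi$. This is the only place where a little care is needed, namely to check that both $2\pi-\gamma$ and $2\pi-\beta$ lie strictly inside the open interval $(0,2\pi)$ so that the $index$ function and Proposition~\ref{pro:index2} are applicable to them.

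Next I would invoke the second part of Proposition~\ref{pro:index2}, which states that $index$ is non-increasing: if $\alpha'<\alpha$ then $index(\alpha)\le index(\alpha')$. Applying this with $\alpha=2\pi-\gamma$ and $\alpha'=2\pi-\beta$ gives $index(2\pi-\gamma)\le index(2\pi-\beta)$; and in the degenerate case $\gamma=\beta$ the two quantities are equal, so in all cases $index(2\pi-\gamma)\le index(2\pi-\beta)$. Since $\psi$ is by definition the index of $2\pi-\beta$, and since the index of $\overline{(P_1,P_2)}$ is by definition $index$ applied to its angular measure $2\pi-\gamma$, this inequality is precisely the assertion of the proposition.

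I do not anticipate any genuine obstacle here: all the substantive work — that a coarse enough tiling of a square always contains a whole tile inside a prescribed angular zone centered at the square's center, and that the coarseness needed is monotone in the size of the zone — has already been carried out in Propositions~\ref{pro:index} and~\ref{pro:index2}. The present proposition is just the specialization of that monotonicity to the worst-case angle $\beta$, turning the per-hint quantity $index(\overline{(P_1,P_2)})$ into the single uniform bound $\psi$ that the analysis of {\tt Mosaic} will use.
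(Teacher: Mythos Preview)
Your argument is correct and is exactly the route the paper takes: the paper simply states that the proposition follows from Proposition~\ref{pro:index2}, and your unpacking of that reference — using the section's assumption $\gamma\le\beta$ to get $2\pi-\beta\le 2\pi-\gamma$ and then applying the monotonicity of $index$ from part~2 of Proposition~\ref{pro:index2} — is precisely what is intended.
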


We are now ready to prove the final result of this section.

\begin{theorem}
Consider an agent $A$ and a treasure located at distance at most $D$ from the initial position of $A$. By executing Algorithm {\tt TreasureHunt2}, agent $A$ finds the treasure after having traveled a distance in $\mathcal{O}(D^{2-\epsilon})$, for some $\epsilon>0$.
\end{theorem}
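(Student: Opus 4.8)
The strategy is to show that Algorithm {\tt TreasureHunt2} terminates by the end of a bounded phase, and that the total cost up to that phase is $\mathcal{O}(D^{2-\epsilon})$ for a suitable $\epsilon>0$ depending only on $\beta$ (through $\psi$). The argument has three parts: (i) a correctness part showing the treasure is found no later than phase $\lceil\log_2 D\rceil+1$; (ii) a per-phase cost bound, obtained by summing the costs of the calls to {\tt Mosaic} inside that phase using Lemma~\ref{lem:ob2}; and (iii) a summation over all phases showing the geometric-type series is dominated by its last term, which is $\mathcal{O}(D^{2-\epsilon})$.

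\textbf{Correctness.} First I would analyze the inner \textbf{Repeat} loop of Algorithm~\ref{alg:obtus}. Each call {\tt Mosaic}$(i,k)$ returns a value $IndexMax\geq k$; by Proposition~\ref{pro:maxin} this value is at most $\psi$ for every hint encountered, so $IndexNew$ is nondecreasing, bounded by $\psi$, and hence the \textbf{Repeat} loop performs at most $\psi$ iterations before $IndexNew=IndexOld$, i.e. before a call to {\tt Mosaic}$(i,k)$ returns $k$ unchanged. When this happens, the condition $IndexMax=k$ held throughout that execution, so by Lemma~\ref{lem:ob1} and the painting logic (together with Proposition~\ref{pro:index2}, which guarantees a black tile can always be found at the index used) the square $S$ of side $2^i$ has been exhaustively searched: every white tile of the finest tiling was scanned with {\tt RectangleScan}, and by Claim~\ref{cl:1} no point excluded by a hint was ever left unscanned unless the treasure lies in $\overline{(P_1,P_2)}$, which it does not. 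Consequently, in phase $i=\lceil\log_2 D\rceil+1$, the square $S$ has side $2^{\lceil\log_2 D\rceil+1}\geq 2D$ and therefore contains the treasure, so the last call to {\tt Mosaic} in that phase — the one returning an unchanged index — must locate the treasure. Thus the algorithm terminates during phase $\lceil\log_2 D\rceil+1$ at the latest.

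\textbf{Cost.} Set $\epsilon':=\epsilon'(\psi)=\frac{1}{2}\bigl(1-\log_{4^{\psi}}(4^{\psi}-1)\bigr)>0$, which is strictly positive since $4^{\psi}-1<4^{\psi}$. By Lemma~\ref{lem:ob2}, a call {\tt Mosaic}$(i,k)$ costs at most $2^{i\frac{3+\log_{4^k}(4^k-1)}{2}+2k+8}$; since $k\leq\psi$ always and the exponent $i\frac{3+\log_{4^k}(4^k-1)}{2}$ is increasing in $k$ (as $\log_{4^k}(4^k-1)\to 1$), each call in phase $i$ costs at most $2^{i(2-\epsilon')+2\psi+8}=C_\psi\cdot 2^{i(2-\epsilon')}$ where $C_\psi:=2^{2\psi+8}$ depends only on $\beta$. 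Within a single phase $i$ there are at most $\psi$ calls to {\tt Mosaic} (by the bound on the \textbf{Repeat} loop above), so the cost of phase $i$ is at most $\psi\,C_\psi\cdot 2^{i(2-\epsilon')}$. Summing over $i=1,\dots,\lceil\log_2 D\rceil+1$, and noting $2-\epsilon'>0$ so the geometric sum is bounded by a constant times its last term,
\begin{align}
\sum_{i=1}^{\lceil\log_2 D\rceil+1}\psi\,C_\psi\cdot 2^{i(2-\epsilon')}
&\leq \psi\,C_\psi\cdot\frac{2^{(2-\epsilon')(\lceil\log_2 D\rceil+2)}}{2^{2-\epsilon'}-1}
\leq C'_\psi\cdot 2^{(2-\epsilon')\log_2 D}
= C'_\psi\cdot D^{2-\epsilon'},
\end{align}
where $C'_\psi$ is a constant depending only on $\beta$. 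Taking $\epsilon:=\epsilon'$ gives the claimed bound $\mathcal{O}(D^{2-\epsilon})$.

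\textbf{Main obstacle.} The routine parts are the geometric summation and the bookkeeping of phases; the substantive points are (a) verifying rigorously that the call returning an unchanged index really has searched $S$ completely — this relies on Lemma~\ref{lem:ob1} ensuring the tiling stays ``black-or-white,'' on Proposition~\ref{pro:index2} ensuring a black tile always exists at the index actually used, and on the subtle fact that once $IndexMax$ exceeds $k$ the algorithm stops painting but the outer loop will retry with a larger $k$; and (b) confirming that the exponent in Lemma~\ref{lem:ob2} is genuinely bounded away from $2$ uniformly over the phase, i.e. that $\log_{4^k}(4^k-1)$ is bounded below $1$ for all $k\leq\psi$ — this is where the dependence of $\epsilon$ on $\beta$ (hence on $\psi$) enters, and where one must be careful that $k$ never exceeds $\psi$, which is exactly the content of Proposition~\ref{pro:maxin}.
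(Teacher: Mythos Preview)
Your proof is correct and follows essentially the same approach as the paper: both bound the number of {\tt Mosaic} calls per phase by $\psi$ via the monotone-and-bounded argument on $IndexNew$, invoke Lemma~\ref{lem:ob2} with $k\le\psi$ to bound each call's cost by $C_\psi\cdot 2^{i(2-\epsilon')}$ where $\epsilon'=\tfrac12(1-\log_{4^\psi}(4^\psi-1))$, and sum the resulting geometric series over phases $i\le\lceil\log_2 D\rceil+1$. The only cosmetic difference is that the paper packages the correctness argument as a proof by contradiction (assuming $i$ reaches $\lceil\log_2 D\rceil+2$), whereas you argue directly that the final {\tt Mosaic} call of phase $\lceil\log_2 D\rceil+1$ returns its input unchanged and therefore performs the exhaustive scan of the containing square.
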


\begin{proof}
We will use the following two claims.

\begin{claim}
\label{cl:3}

Let $i\geq 1$ be an integer. The number of executions of the repeat loop in the $i$-th execution of the external loop in Algorithm ~\ref{alg:obtus} is bounded by $\psi$.
\end{claim}

%


\begin{proofclaim}
Suppose by contradiction that the claim does not hold for some $i\geq 1$. So, the number of executions of the repeat loop in the $i$-th execution of the external loop in Algorithm~\ref{alg:obtus} is at least $\psi+1$. In each of these executions of the repeat loop, the agent calls function {\tt Mosaic}$(i,*)$ exactly once. For all $1\leq j\leq \psi+1$ ($\psi\geq1$, by definition of an index), denote by $v_j$ the returned value of function  {\tt Mosaic}$(i,*)$ in the $j$-th execution of the repeat loop in the $i$-th execution of the external loop. Note that  $v_1\ne1$: indeed, if $v_1=1$ the repeat loop would be executed exactly once, which would be a contradiction because it is executed at least $\psi+1\geq2$ times.

In view of Algorithm~\ref{alg:obtus} and Proposition~\ref{pro:maxin}, the returned value of {\tt Mosaic}$(i,*)$ is a positive integer that is at most $\psi$. Since $v_1\ne1$, this implies that $\psi\geq2$. Moreover, for all $2\leq j\leq \psi$, we have $v_{j}\geq v_{j-1}$ (cf. lines~\ref{obt:1}-\ref{obt:2} of Algorithm~\ref{alg:obtus} and lines~\ref{l:assign},~\ref{l:proof}-\ref{l:proof2} of Algorithm~\ref{alg:mosaic}). Hence, there exists an integer $k\leq\psi$ such that  $v_{k}=v_{k-1}$. However, according to Algorithm~\ref{alg:obtus}, this implies that the number of executions of the repeat loop in the $i$-th execution of the external loop is at most $k\leq\psi$.  This is a contradiction which concludes the proof of the claim.
\end{proofclaim}

\begin{claim}
\label{cl:4}
The distance traveled by the agent before variable $i$ becomes equal to $\lceil\log_2D\rceil+2$ in the execution of  Algorithm~\ref{alg:obtus} is 
 $\mathcal{O}(D^{2-\epsilon})$, where $\epsilon=\frac{1}{2}(1-\log_{4^\psi}(4^\psi-1))>0$.
\end{claim}
\begin{proofclaim}
In view of the fact that the returned value of every call to function {\tt Mosaic} in the execution of Algorithm~\ref{alg:obtus} is at most $\psi$, it follows that in each call to function {\tt Mosaic}$(*,k)$ the parameter $k$ is always at most $\psi$. Hence, in view of Claim \ref{cl:3} and Lemma~\ref{lem:ob2}, as long as variable $i$ does not reach the value $\lceil\log_2D\rceil+2$, the agent traveled a distance at most 


\begin{align}
\label{form:19}
~&\psi \cdot \sum_{i=1}^{\lceil\log_2D\rceil+1} 2^{i\frac{3+\log_{4^\psi}(4^\psi-1)}{2}+2\psi+8}\\
\leq &\psi 2^{{(\lceil\log_2D\rceil+1)}\frac{3+\log_{4^\psi}(4^\psi-1)}{2}+2\psi+{9}}\\
\leq &\psi 2^{2\psi+{12}+\log_{4^\psi}(4^\psi-1)}2^{(\log_2D)\frac{3+\log_{4^\psi}(4^\psi-1)}{2}}\\
= &\psi 2^{2\psi+{12}+\log_{4^\psi}(4^\psi-1)}D^{2-\frac{1}{2}(1-\log_{4^\psi}(4^\psi-1))}\label{form:22}
\end{align}

By (\ref{form:22}), the total distance traveled by the agent executing Algorithm~\ref{alg:obtus} is $\mathcal{O}(D^{2-\epsilon})$ where $\epsilon=\frac{1}{2}(1-\log_{4^\psi}(4^\psi-1))$. Since $\psi$ is a positive integer, we have $0<\log_{4^\psi}(4^\psi-1)<1$ and hence $\epsilon>0$. This ends the proof of the claim.
\end{proofclaim}

Assume that the theorem is false. As long as variable $i$ does not reach $\lceil\log_2D\rceil+2$, the agent cannot find the treasure, as this would contradict Claim~\ref{cl:4}. Thus, in view of Claim~\ref{cl:3}, before the time $\tau$ when variable $i$ reaches $\lceil\log_2D\rceil+2$ the treasure is not found. By Algorithm~\ref{alg:obtus}, this implies that during the last call to function {\tt Mosaic} before time $\tau$, the function returns a value that is equal to its second input parameter. This implies that during this call, the agent has executed lines~\ref{b2:debut} to~\ref{b2:fin} of Algorithm~\ref{alg:mosaic}: more precisely, there is some integer $x$ such that from each white tile $t$ of $Tiling(x)$ of the straight square $S$ that is centered at the initial position of the agent and that has sides of length $2^{\lceil\log_2D\rceil+1}$, the agent has executed function {\tt RectangleScan}$(t)$. Hence, at the end of the execution of lines~\ref{b2:debut} to~\ref{b2:fin}, the agent has seen all points of each white tile of $Tiling(x)$ of $S$. Moreover, in view of Lemma~\ref{lem:ob1}, we know that the tiles that are not white, in $Tiling(x)$ of $S$, are necessarily black. Given a black tile $\sigma$ of $Tiling(x)$, each point of $\sigma$ is black, which, in view of lines~\ref{line:pb1} to~\ref{line:pb2} of Algorithm~\ref{alg:mosaic}, implies that $\sigma$ cannot contain the treasure. Since square $S$ necessarily contains the treasure, it follows that the agent must find the treasure by the end of the last execution of function {\tt Mosaic} before time $\tau$. As a consequence, the agent stops the execution of Algorithm~\ref{alg:obtus} before assigning $\lceil\log_2D\rceil+2$ to variable $i$ and thus, we get a contradiction with the definition of time $\tau$, which proves the theorem.
\end{proof}

\section{Arbitrary angles}

In this section we observe that if hints can be arbitrary angles smaller than $2\pi$ then the treasure hunt cost $\Theta(D^2)$ cannot be improved in the worst case.
We prove the following proposition.

\begin{proposition}\label{lower}
If hints can be arbitrary angles smaller than $2\pi$ then the optimal cost of treasure hunt for a treasure at distance at most $D$ from the starting point of the agent is $\Omega(D^2)$.
\end{proposition}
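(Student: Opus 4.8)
The plan is an adversary argument: against any deterministic algorithm, we exhibit a sequence of hints and a treasure position at distance at most $D$ from the starting point $O$ that force cost $\Omega(D^2)$. The intuition is that a hint which is an angle $2\pi-\delta$ with $\delta$ tiny is almost worthless, because the excluded zone $\overline{(P_1,P_2)}$ is then a thin wedge; by making the successive $\delta$'s summably small, the total area that the whole (possibly infinite) sequence of hints can rule out of the disk $B_D(O)$ of radius $D$ around $O$ stays below, say, $\tfrac{\pi}{2}D^2$. Since the agent ``sees'' the treasure only upon reaching distance $1$ from it, the treasure position affects neither the hints issued by the adversary nor the directions chosen by the agent --- only the moment it halts. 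Hence the agent follows one curve $\gamma$ determined purely by the interaction, and correctness forces the unit neighbourhood of $\gamma$ to cover almost all of $B_D(O)$; a length-versus-area estimate then gives the bound.

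First I would fix the adversary. At the $i$-th hint request ($i=0,1,2,\dots$), issued while the agent is at a point $p_i$ (with $p_0=O$), the adversary returns the hint of angle $2\pi-\delta_i$ whose excluded wedge has apex $p_i$ and angle $\delta_i=\tfrac{\pi}{18}\,2^{-i}$, so that $\sum_{i\ge0}\delta_i=\tfrac{\pi}{9}$. The direction of that wedge is chosen so that it meets $B_D(O)$ in small area: if $|Op_i|\le 2D$, any direction is used, and then the intersection of the wedge with $B_D(O)$ lies in a circular sector of apex $p_i$, radius $3D$ and angle $\delta_i$, hence has area at most $\tfrac{\delta_i}{2\pi}\pi(3D)^2=\tfrac92\delta_iD^2$; if $|Op_i|>2D$, then $B_D(O)$ subtends at $p_i$ an angle $2\arcsin(D/|Op_i|)<\tfrac{\pi}{3}<\pi-\delta_i$, so the wedge can be placed entirely among directions missing $B_D(O)$, and the intersection is empty. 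In both cases the $i$-th excluded wedge meets $B_D(O)$ in area at most $\tfrac92\delta_iD^2$, so the union $W$ of all excluded wedges satisfies $\mathrm{area}(W\cap B_D(O))\le\tfrac92 D^2\sum_{i\ge0}\delta_i=\tfrac{\pi}{2}D^2$.

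Next I would observe that, since the adversary's choice at step $i$ depends only on $p_i$ and the agent's $i$-th move depends only on the hints received so far, the whole sequence of positions and hints, the trajectory curve $\gamma$ it traces, and the set $W$, are all determined; the treasure position $z$ enters only through the rule that the agent stops as soon as $\gamma$ reaches distance $1$ from $z$. Let $C^\ast$ be the worst-case cost of the algorithm over treasure positions in $B_D(O)$; if $C^\ast=\infty$ we are done, so suppose $C^\ast<\infty$. For each $z\in B_D(O)\setminus W$ the above hint sequence is consistent with the treasure being at $z$, since $z$ lies in every hint wedge; a correct algorithm must then bring $\gamma$ to distance at most $1$ from $z$ within total length $C^\ast$. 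Therefore the initial arc of $\gamma$ of length $C^\ast$ has unit neighbourhood containing $B_D(O)\setminus W$. Using the standard fact that the unit neighbourhood of a rectifiable curve of length $L$ has area at most $2L+\pi$ (slide a unit disk along the curve: swept area grows at rate at most $2$), we obtain
\[
\pi D^2\le \mathrm{area}(W\cap B_D(O))+\bigl(2C^\ast+\pi\bigr)\le \tfrac{\pi}{2}D^2+2C^\ast+\pi,
\]
whence $C^\ast\ge\tfrac{\pi}{4}D^2-\tfrac{\pi}{2}=\Omega(D^2)$, which is the desired lower bound.

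The delicate point, and the one I would write out with most care, is the decoupling used above: one must argue precisely that the agent's geometric trajectory does not depend on where the treasure is (the treasure influences only the halting condition, via the distance-$1$ visibility), so that a single curve $\gamma$ is forced to come close to every treasure position not eliminated by the hints. A lesser technicality is coping with an agent that may wander far from $O$; the adversary then aims each thin excluded wedge away from $B_D(O)$, which works precisely because from a far apex the disk $B_D(O)$ subtends an angle bounded away from $2\pi$. The remaining ingredients --- the sector-area bound, the geometric series $\sum\delta_i$, and the curve-neighbourhood area estimate --- are routine.
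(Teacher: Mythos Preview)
Your proof is correct and follows essentially the same adversary-and-area-counting approach as the paper: geometrically shrinking excluded wedges whose total intersection with the disc of radius $D$ has area bounded away from $\pi D^2$, combined with the $2L+\pi$ bound on the unit neighbourhood of a length-$L$ curve. The only differences are cosmetic (your in/out threshold is $2D$ rather than $D$, your constants differ, and you spell out the trajectory/treasure decoupling that the paper leaves implicit).
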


\begin{proof}
Consider the disc $\cD$ of radius $D$ centered at the initial point of the agent. Consider any position of the agent and suppose that the angle given as hint has size $\gamma >\pi$. Call the complement of the hint the {\em forbidden angle}. It has size $\alpha=2\pi-\gamma<\pi$. The forbidden angle has the property that the treasure must be outside of it. If the current position of the agent is outside of disc $\cD$ then the forbidden angle of size $\alpha$ can be chosen in such a way that it is disjoint from $\cD$, i.e., that it does not exclude any point of the disc $\cD$ as a possible location of the treasure. If the current position of the agent is in $\cD$, then the intersection of the forbidden angle with $\cD$ has area at most $\frac{\alpha}{2\pi} \pi (2D)^2=\alpha \cdot 2D^2$, as it is at most the area of a sector of the disc with radius $2D$ and angle $\alpha$.

Suppose that there exists a treasure hunt algorithm at cost at most $D^2/2$. Let the sizes of forbidden angles corresponding to consecutive hints be $\frac{1}{2}$, $\frac{1}{4}$, $\frac{1}{8},...$ etc., each of size half of the preceding, and such that the forbidden angle is disjoint from the disc $\cD$, whenever the current position of the agent is outside of $\cD$. When the position is in $\cD$, the angle of the respective size can be chosen arbitrarily. The total area of the intersection of $\cD$ with the forbidden angles is at most  $(\sum_{i=1}^{\infty} \frac{1}{2^i})\cdot 2D^2= 2D^2$. This leaves out a part of the disc $\cD$ whose area is at least $(\pi-2)D^2$. During the walk of length at most $D^2/2$ of the agent,  the set of points of $\cD$ from which the agent is at distance at most 1 at some point of the walk has area at most $\frac{D^2}{2}\cdot 2+\pi=D^2+\pi$. For $D>5$ we have $(\pi-2)D^2>D^2+\pi$.   
Hence there exists a point of $\cD$ not included in any of the forbidden angles, from which the agent has never been at distance at most 1. 
Placing the treasure in this point refutes the correctness of the treasure hunt algorithm. This implies that the trajectory of the agent must have length larger than $D^2/2$,
for $D>5$,
hence the optimal cost of treasure hunt is $\Omega(D^2)$.
\end{proof}

\section{Conclusion}

For hints that are angles at most $\pi$ we gave a treasure hunt algorithm with optimal cost linear in $D$. For larger angles we showed a separation between the case where angles are bounded away from $2\pi$, when we designed an algorithm with cost strictly subquadratic in $D$, and the case where angles have arbitrary values smaller than $2\pi$, when
we showed a quadratic lower bound on the cost. The optimal cost of treasure hunt with large angles bounded away from $2\pi$ remains open. In particular, the following questions seem intriguing. Is the optimal cost linear in $D$ in this case, or is it possible to prove a super-linear lower bound on it? Does the order of magnitude of this optimal cost depend on the bound $\pi<\beta<2\pi$ on the angles given as hints?

\bibliographystyle{plain}


\end{document}